\def\E{{\mathbb E}}        
\def\var{{\text{Var}}}
\def\cov{{\text{cov}}}
\newcommand{\nn}{\nonumber\\}
\newtheorem{theorem}{Theorem}
\newtheorem{lemma}[theorem]{Lemma}
\newenvironment{proof}{{\bf Proof:}}{\hfill\rule{2mm}{2mm} \par}
\begin{document}
\title{Blocking estimators and inference under the Neyman-Rubin model} 
\author{Michael J.~Higgins\footnote{Assistant Professor, Kansas State University.} \and
    Fredrik S\"avje\footnote{Postdoctoral fellow, Department of Political Science and Department of Statistics, UC Berkeley.} \and
    Jasjeet S.~Sekhon\footnote{Robson Professor of Political Science and Statistics, UC Berkeley.}}

\date{\today}

\doublespace
\maketitle
\begin{abstract}
  We derive the variances of estimators for sample
  average treatment effects under the Neyman-Rubin potential outcomes
  model for arbitrary blocking assignments and an arbitrary number of
  treatments.
\end{abstract}

\section{Introduction}

Going back to~\citet{fisher26}, the canonical experimental design for preventing imbalances in covariates is \emph{blocking}: where one groups similar experimental units together and assigns treatment in fixed proportions within groups and independently across groups. If blocking reduces imbalances in prognostically important covariates, blocking will improve the expected precision of estimates. As a consequence of the higher accuracy, variance estimators which do not take the design into account will generally overestimate the uncertainty. \citet{neyman35} discusses estimation with a block design under his potential outcomes model---which is often called the Neyman-Rubin causal model~\citep[NRCM,][]{splawa1990application,rubin1974estimating,holland1986statistics}.  In more recent work, variance for the matched-pairs design under NRCM has been analyzed by~\citet[looking at conditional average treatment effects, CATE, as estimands]{abadie07} and~\citet[population average treatment effects, PATE]{imai08}. \cite{imbens11} provides an overview, and discusses PATE, CATE, and the sample average treatment effect (SATE). He also notes the impossibility of variance estimation of the conditional average treatment effect (CATE) with a matched-pairs design.
  
We build upon this literature and discuss estimation of the sample average treatment effect (SATE) under block designs with arbitrarily many treatment categories and an arbitrary blocking of units. Specially, we derive variances for two unbiased estimators of the SATE---the difference-in-means estimator and the Horvitz-Thompson estimator---and give conservative estimators of these variances whenever block sizes are at least twice the number of treatment categories. Unlike most other methods, these are closed-form formulas of the SATE variance under NRCM applicable to a wide range of settings without making additional parametric assumptions.

\section{Inference of blocking estimators under Neyman-Rubin model \label{secInf}}
\subsection{Notation and preliminaries}
    
      There are $n$ units, numbered $1$ through $n$.
      There are $r$ treatments, numbered 1 through $r$.
      Each unit is assigned to exactly one treatment.
      Each unit $i$ has a vector of block covariates $\mathbf{x_i}$.      
      A distance between block covariates (such as the Mahalanobis distance) 
      can be computed between each pair of distinct covariates.
      
      Suppose the units are partitioned into $b$ blocks,
      numbered 1 through $b$,
      with each block containing at least $t^*$ units, with $t^* \geq r$.
      Let $n_c$ denote the number of units in block $c$.
      Assume that the units within each block $c$ are ordered in some way:
      let $(k,c)$ denote the $k^{\text{th}}$ unit in block $c$.
      Let $z$ denote the remainder of $n/r$, and let
      $z_c$ denote the remainder of $n_c/r$.

      \subsubsection{Balanced complete and block randomization}
      Treatment assignment is 
      \textit{balanced} if
          $z$ 
          treatments are replicated $\lfloor n/r\rfloor + 1$ times,
          and $r-z$ of the treatments are replicated $\lfloor n/r\rfloor$ times.
      A balanced treatment assignment is 
      \textit{completely randomized} if each of the
            \begin{equation}
              \binom r z\prod_{i = 0}^{z-1} \binom{n - i(\lfloor n/r \rfloor + 1)}{\lfloor n/r \rfloor + 1}
              \prod_{i = 0}^{r-z-1} \binom{n - z(\lfloor n/r \rfloor + 1) - i\lfloor n/r \rfloor}{\lfloor n/r \rfloor }
            \end{equation}
          possible treatment assignments are equally likely.
      Treatment is \textit{balanced block randomized} if treatment is 
      balanced and completely randomized within
      each block and treatment is assigned independently across blocks.
      
      Let $T_{kcs}$ denote treatment indicators for each unit $(k,c)$:
      \begin{equation}
        T_{kcs} = \left\{
        \begin{array}{ll}
          1, & \text{unit $(k,c)$ receives treatment $s$},\\
          0, &\text{otherwise.}
        \end{array}
        \right.
      \end{equation}
      Let $\#T_{cs} = \sum_{k = 1}^{n_c} T_{kcs}$ denote the 
      number of units in block $c$ that
      receive treatment $s$, and let
      $\#T_{s} = \sum_{c=1}^b\#T_{cs}$ denote
      the number of units in total assigned to $s$.
      Let $z_c$ denote the remainder of $n_c/r$.   
      
     Under balanced complete randomization, $\#T_{s}$ has distribution
      \begin{equation}
        \#T_{s} = \left\{ 
          \begin{array}{ll}
            \lfloor n/r \rfloor + 1& \text{ with probability } z/r,\\
            \lfloor n/r \rfloor& \text{ with probability } (r - z)/r.
          \end{array}
        \right.
        \label{distnums}
      \end{equation}
      Under balanced block randomization, $\#T_{cs}$ has distribution
      \begin{equation}
        \#T_{cs} = \left\{ 
          \begin{array}{ll}
            \lfloor n_c/r \rfloor + 1& \text{ with probability } z_c/r,\\
            \lfloor n_c/r \rfloor& \text{ with probability } (r - z_c)/r.
          \end{array}
        \right.
        \label{distnumcs}
      \end{equation}
      Since $t^* \geq r$, it follows that $\#T_{s} \geq \#T_{cs} \geq 1$.

      \subsubsection{Model for response: the Neyman-Rubin Causal Model}

      We assume responses follow
      the Neyman-Rubin Causal Model 
      (NRCM)~\citep{splawa1990application, rubin1974estimating, holland1986statistics}.
      Let $y_{kcs}$ denote the \textit{potential outcome} of unit $(k,c)$ given 
      treatment $s$---the hypothetical observed value of unit $(k,c)$ 
      had that unit received treatment $s$.
      Under the NRCM, the potential outcome $y_{kcs}$ is non-random, and
      the value of this outcome is observed if and only if $(k,c)$ receives treatment $s$;
      exactly one of $\{y_{kcs}\}_{s = 1}^r$ is observed.
      The observed response is:
      \begin{equation}
        Y_{kc} \equiv y_{kc1}T_{kc1} + y_{kc2}T_{kc2} + \cdots + y_{kcr}T_{kcr}.
      \end{equation}
      Inherent in this equation is the \textit{stable-unit treatment value assumption} (SUTVA):
      the observed $Y_{kc}$ only depends on which treatment is assigned to unit $(k,c)$,
      and is not affected by the treatment assignment of any other unit $(k',c')$.
      	
    \subsubsection{Common parameters and estimates under the Neyman-Rubin Causal Model
    \label{commonparsubsec}}
   
      The domain-level mean and variance of potential outcomes for treatment $s$ are:
      \begin{align}
       \mu_{s} & \equiv \frac{1}{n}\sum_{c=1}^b\sum_{k=1}^{n_c} y_{kcs},\label{definedomainmean}\\
        \sigma_{s}^2 &\equiv \sum_{c=1}^b\sum_{k=1}^{n_c}\frac{(y_{kcs} - \mu_{s})^2}{n}  =
          \sum_{c=1}^b\sum_{k=1}^{n_c} \frac{y_{kcs}^2}{n} - \left(\sum_{c=1}^b\sum_{k=1}^{n_c} \frac{y_{kcs}}{n}\right)^2,\label{definedomainvar}
          \intertext{and the domain-level covariance between potential outcomes for treatment $s$ and treatment $t$ is:}
        \gamma_{st} &\equiv  \sum_{c=1}^b\sum_{k=1}^{n_c}\frac{(y_{kcs} - \mu_{s})(y_{kct} - \mu_{t})}{n} \nn &=
           \sum_{c=1}^b\sum_{k=1}^{n_c}\frac{y_{kcs}y_{kct}}{n} 
          -  \left(\sum_{c=1}^b\sum_{k=1}^{n_c} \frac{y_{kcs}}{n}\right) \left(\sum_{c=1}^b\sum_{k=1}^{n_c} \frac{y_{kct}}{n}\right).\label{definedomaincov}
       \end{align}
    
    Two estimators for $\mu_s$ 
    are the sample mean and the Horvitz-Thompson estimator~\citep{horvitz1952generalization}:
    \begin{align}
       \hat\mu_{s,\text{samp}} &\equiv \sum_{c=1}^b\sum_{k=1}^{n_c} \frac{y_{kcs}T_{kcs}}{\#T_{s}},\\
       \hat\mu_{s,\text{HT}}  &\equiv \sum_{c=1}^b\sum_{k=1}^{n_c} \frac{y_{kcs}T_{kcs}}{n/r}. \\
   \intertext{Two estimators for $\sigma_s^2$ are the sample variance and the
   Horvitz-Thompson estimate of the variance:}
      \hat\sigma^2_{s,\text{samp}} &\equiv \frac{n-1}{n}\sum_{c=1}^b\sum_{k=1}^{n_c}\frac{T_{kcs}\left(y_{kcs} - \sum_{c=1}^b\sum_{k=1}^{n_c}\frac{y_{kcs}T_{kcs}}{\#T_{s}}\right)^2}{\#T_{s} -1}.\\
      \hat\sigma^2_{s,\text{HT}} &\equiv \frac{(n-1)r}{n^2}\sum_{c=1}^b\sum_{k=1}^{n_c} y^2_{kcs}T_{kcs}  \nn & ~~~~~
      -\frac {(n-1)r^2} {n^2(n-r) + nz(r-z)}\sum_{(k,c)\neq (k',c')} y_{kcs}y_{k'c's}T_{kcs}T_{k'c's}
    \end{align}
    The sample estimators weight observations by the inverse of the number of observations receiving treatment $s$, 
    and the Horvitz-Thompson estimators weight observations by 
    the inverse of the probability of being assigned treatment $s$.     
    Block-level parameters $\mu_{cs}$,  $\sigma_{cs}$, and $\gamma_{cst}$,
      and block-level estimators $\hat\mu_{cs,\text{samp}}$, $\hat\mu_{cs,\text{HT}}$,
     $\hat\sigma^2_{cs,\text{samp}}$, and $\hat\sigma^2_{cs,\text{HT}}$
      are defined as above except that sums range over only units in block $c$.    
      
    When treatment is balanced and completely randomized, the domain-level
    estimators satisfy the following properties:
    \begin{lemma}
        Under balanced and completely randomized treatment assignment, 
        for any treatments $s$ and $t$ with $s \neq t$,
        \begin{eqnarray}
          \E(\hat\mu_{s,\text{samp}}) &=& \mu_{s},\\
          \var(\hat\mu_{s,\text{samp}}) &=& \frac{r-1}{n-1}\sigma_s^2 + \frac{rz(r-z)}{(n-1)(n-z)(n+r-z)}\sigma_s^2 , \\
          \cov(\hat\mu_{s,\text{samp}},\hat\mu_{t,\text{samp}}) &=& \frac{-\gamma_{st}}{n-1}.
        \end{eqnarray}
        \label{lemmadiffblockests}
      \end{lemma}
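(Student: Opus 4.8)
The plan is to exploit the fact that, conditional on the realized treatment counts, balanced complete randomization reduces to sampling without replacement, for which the first two moments are classical. Writing $i = (k,c)$ for a generic unit (so $\sum_i$ abbreviates $\sum_c\sum_k$), the central observation is that conditional on $\#T_{s} = m$, the set of units receiving treatment $s$ is a simple random sample (SRS) of size $m$ drawn without replacement from the $n$ units: the randomization treats units exchangeably and, conditional on any vector of counts, every balanced assignment is equally likely, so the $s$-set is uniform over the size-$m$ subsets. The marginal law of $\#T_{s}$ is supplied by~\eqref{distnums}.

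For unbiasedness and the variance, I would first note that given $\#T_{s} = m$ the estimator $\hat\mu_{s,\text{samp}}$ is exactly the sample mean of that SRS, so $\E(\hat\mu_{s,\text{samp}} \mid \#T_{s} = m) = \mu_{s}$ and, by the standard without-replacement variance formula together with the convention $\sigma_s^2 = \frac{n-1}{n}S_s^2$ (where $S_s^2$ has divisor $n-1$), $\var(\hat\mu_{s,\text{samp}} \mid \#T_{s} = m) = \frac{\sigma_s^2}{n-1}\left(\frac{n}{m} - 1\right)$. Taking expectations gives $\E(\hat\mu_{s,\text{samp}}) = \mu_{s}$ at once. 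Because the conditional mean is the constant $\mu_s$, the law of total variance kills its second term and leaves $\var(\hat\mu_{s,\text{samp}}) = \frac{\sigma_s^2}{n-1}\bigl(n\,\E[1/\#T_{s}] - 1\bigr)$. Using~\eqref{distnums} with $q = \lfloor n/r\rfloor$ and $n = qr + z$, one computes $\E[1/\#T_{s}] = \frac{z/r}{q+1} + \frac{(r-z)/r}{q} = \frac{r(n+r-2z)}{(n-z)(n+r-z)}$, and substituting this reproduces the stated variance after collecting terms.

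For the covariance, I would expand $\E(\hat\mu_{s,\text{samp}}\hat\mu_{t,\text{samp}}) = \sum_{i,j} y_{is} y_{jt}\, \E[T_{is}T_{jt}/(\#T_{s} \#T_{t})]$ and observe that the diagonal terms vanish since $T_{is}T_{it} = 0$. The key point is that for $i \neq j$ the weight $\E[T_{is}T_{jt}/(\#T_{s} \#T_{t})]$ equals $\frac{1}{n(n-1)}$ regardless of the random counts: conditioning on all treatment counts, exchangeability gives $P(T_{is}=1, T_{jt}=1) = \frac{\#T_{s}}{n}\cdot\frac{\#T_{t}}{n-1}$, so the quantity inside the expectation is the constant $\frac{1}{n(n-1)}$ conditionally, hence unconditionally. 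Then $\sum_{i\neq j} y_{is}y_{jt} = n^2\mu_s\mu_t - \sum_i y_{is}y_{it} = n(n-1)\mu_s\mu_t - n\gamma_{st}$ by~\eqref{definedomaincov}, which yields $\E(\hat\mu_{s,\text{samp}}\hat\mu_{t,\text{samp}}) = \mu_s\mu_t - \gamma_{st}/(n-1)$ and therefore the claimed covariance; note it is free of $z$.

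The main obstacle is less any single calculation than establishing the two structural facts cleanly: that conditioning on the counts renders the assignment a uniform random partition with exchangeable units (so the SRS moment formulas apply and the pairwise assignment probabilities factor as above), and that the count-dependence then cancels—trivially in the covariance, where the weight is exactly constant, and through the algebraic identity for $\E[1/\#T_{s}]$ in the variance. Care is also needed to track the finite-population variance convention (divisor $n$ in $\sigma_s^2$ versus $n-1$ in $S_s^2$) so that the factors of $n-1$ come out correctly.
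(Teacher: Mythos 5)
Your proposal is correct and follows essentially the same route as the paper's proof in Appendix~\ref{proofoflemmasforvarcals}: both condition on the realized count $\#T_{s}$, exploit exchangeability of units under complete randomization, reduce the variance to the single quantity $\E\left(1/\#T_{s}\right)$ computed from~\eqref{distnums}, and obtain the covariance from the constant pairwise weight $\E\left(T_{is}T_{jt}/(\#T_{s}\#T_{t})\right) = \tfrac{1}{n(n-1)}$ exactly as in~\eqref{exp4samp}. The only difference is presentational: you invoke the classical simple-random-sampling variance formula together with the law of total variance, whereas the paper re-derives that formula by expanding $\E(\hat\mu_{s,\text{samp}}^2)$ term by term via the indicator moments~\eqref{exp2samp}--\eqref{exp3samp}.
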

      
      \pagebreak
      \begin{lemma}
        Under balanced and completely randomized treatment assignment, 
        for any treatments $s$ and $t$ with $s \neq t$,
        \begin{eqnarray}
          \E(\hat\mu_{s,\text{HT}}) &=& \mu_{s},\\
          \var(\hat\mu_{s,\text{HT}}) &=& \frac{r-1}{n-1}\sigma_s^2  + \frac{z(r-z)}{n^3(n-1)}\sum_{(k,c) \neq (k',c')}y_{kcs}y_{k' c's}, \nn \\
          \cov(\hat\mu_{s,\text{HT}},\hat\mu_{t,\text{HT}}) &=& \frac{-\gamma_{st}}{n-1}
        - \frac{z(r-z)}{(r-1)n^3(n-1)}\sum_{(k,c) \neq (k',c')}y_{kcs}y_{k' ct}.
        \end{eqnarray}
        \label{lemmahtblockests}
      \end{lemma}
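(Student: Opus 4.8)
The plan is to compute each moment of the Horvitz-Thompson estimator $\hat\mu_{s,\text{HT}} = \frac{r}{n}\sum_{c,k} y_{kcs}T_{kcs}$ directly from the joint distribution of the treatment indicators $\{T_{kcs}\}$ under balanced complete randomization. The single quantity driving everything is the inclusion probabilities: I first need $\E(T_{kcs})$, $\E(T_{kcs}T_{k'c's})$ for $(k,c)\neq(k',c')$, and the analogous cross-treatment products $\E(T_{kcs}T_{k'c't})$. Because $\#T_s$ is random (it equals $\lfloor n/r\rfloor+1$ with probability $z/r$ and $\lfloor n/r\rfloor$ otherwise, by~\eqref{distnums}), I would obtain these by conditioning on the vector $(\#T_1,\dots,\#T_r)$: given those counts, the assignment is a uniform allocation, so first-order and second-order inclusion probabilities are the standard finite-population sampling-without-replacement expressions, and I then average over the count distribution.

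For the expectation, $\E(\hat\mu_{s,\text{HT}})=\frac{r}{n}\sum_{c,k}y_{kcs}\E(T_{kcs})$, and the marginal $\E(T_{kcs})=r/n$ (each unit is equally likely to receive any treatment under balanced complete randomization, averaging over the count split), immediately yielding $\mu_s$ and confirming unbiasedness. For the variance, I would write
\begin{equation}
\var(\hat\mu_{s,\text{HT}}) = \frac{r^2}{n^2}\left[\sum_{c,k} y_{kcs}^2\var(T_{kcs}) + \sum_{(k,c)\neq(k',c')} y_{kcs}y_{k'c's}\cov(T_{kcs},T_{k'c's})\right].
\end{equation}
The single-index term uses $\var(T_{kcs})=\frac{r}{n}(1-\frac{r}{n})$. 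The cross term requires $\cov(T_{kcs},T_{k'c's}) = \E(T_{kcs}T_{k'c's}) - (r/n)^2$, and here the randomness of $\#T_s$ matters: the second inclusion probability is $\E\big[\tfrac{\#T_s(\#T_s-1)}{n(n-1)}\big]$. Using the two-point distribution of $\#T_s$ one computes $\E[\#T_s(\#T_s-1)]$, which introduces the factor $z(r-z)$ measuring the extra dispersion from the random count split. I would then simplify, matching against the stated decomposition into a $\frac{r-1}{n-1}\sigma_s^2$ piece plus the residual double-sum term weighted by $\frac{z(r-z)}{n^3(n-1)}$; the algebraic identity $\sum_{(k,c)\neq(k',c')}y_{kcs}y_{k'c's} = (\sum y_{kcs})^2 - \sum y_{kcs}^2 = n^2\mu_s^2 - \sum y_{kcs}^2$ lets me convert between the $\sigma_s^2$ form and the raw double-sum form.

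For the covariance between treatments $s$ and $t$, the analogous expansion gives
\begin{equation}
\cov(\hat\mu_{s,\text{HT}},\hat\mu_{t,\text{HT}}) = \frac{r^2}{n^2}\sum_{(k,c),(k',c')} y_{kcs}y_{k'c't}\cov(T_{kcs},T_{k'c't}),
\end{equation}
where now the diagonal $(k,c)=(k',c')$ contributes $\cov(T_{kcs},T_{kct}) = -(r/n)^2$ since a unit cannot receive both $s$ and $t$, and the off-diagonal needs $\E(T_{kcs}T_{k'c't})$, again obtained by conditioning on the counts and using $\E[\#T_s\#T_t]$. The two-point structure of the counts produces the asymmetric $z(r-z)$ correction, and combining the diagonal and off-diagonal pieces and rewriting via $\gamma_{st}$ yields the claimed $\frac{-\gamma_{st}}{n-1}$ leading term plus the residual double-sum correction.

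I expect the main obstacle to be the bookkeeping around the random count $\#T_s$: the second-order inclusion probabilities are mixtures over the count distribution, and extracting the precise coefficient $z(r-z)$ (rather than just the leading $r/n$ behavior one would get if $\#T_s$ were fixed at $n/r$) requires careful evaluation of $\E[\#T_s(\#T_s-1)]$ and $\E[\#T_s\#T_t]$ and then patient algebra to fold the results into the $\sigma_s^2$/$\gamma_{st}$ normalization and match the stated denominators $n^3(n-1)$ and $(r-1)n^3(n-1)$ exactly.
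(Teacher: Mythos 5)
Your strategy is the same as the paper's: condition on the count vector $(\#T_1,\dots,\#T_r)$ so that the second-order inclusion probabilities become $\E\left[\#T_s(\#T_s-1)/(n(n-1))\right]$ and $\E\left[\#T_s\#T_t/(n(n-1))\right]$, evaluate $\E[(\#T_s)^2]$ and $\E[\#T_s\#T_t]$ from the two-point distribution \eqref{distnums} to extract the $z(r-z)$ corrections, use $T_{kcs}T_{kct}=0$ on the diagonal of the cross-treatment sum, and convert between raw double sums and $\sigma_s^2$, $\gamma_{st}$ via $\sum_{(k,c)\neq(k',c')}y_{kcs}y_{k'c's}=\left(\sum y_{kcs}\right)^2-\sum y_{kcs}^2$. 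The paper's appendix does exactly this, merely organizing the computation as $\E(\hat\mu_{s,\text{HT}}^2)-\mu_s^2$ rather than as a sum of indicator variances and covariances; that difference is cosmetic.

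There is, however, a concrete error you must fix before anything goes through: the marginal inclusion probability is $\E(T_{kcs})=\E(\#T_s)/n=(n/r)/n=1/r$, not $r/n$. With the value you state, $\E(\hat\mu_{s,\text{HT}})=\frac{r}{n}\sum_{c,k}y_{kcs}\cdot\frac{r}{n}=\frac{r^2}{n}\mu_s$, which does not ``immediately yield $\mu_s$'' except when $n=r^2$. The same slip propagates into your single-index term, where $\var(T_{kcs})$ should be $\frac{1}{r}\left(1-\frac{1}{r}\right)$ rather than $\frac{r}{n}\left(1-\frac{r}{n}\right)$, and into the diagonal of the covariance expansion, where $\cov(T_{kcs},T_{kct})=-1/r^2$ rather than $-(r/n)^2$. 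You appear to be conflating the estimator's normalizing constant $n/r$ (the expected number of treated units) with the per-unit assignment probability $1/r$. Once $1/r$ is substituted throughout, the remainder of your plan is sound and will reproduce the stated expectation, variance, and covariance, including the denominators $n^3(n-1)$ and $(r-1)n^3(n-1)$.
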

      \begin{lemma}
      Under balanced and completely randomized treatment assignment, for any treatment $s$,
        \begin{equation}
          \E(\hat\sigma^2_{s,\text{samp}} ) = \sigma^2_{s}, ~~~~~
          \E(\hat\sigma^2_{s,\text{HT}} ) = \sigma^2_{s}.
        \end{equation}
        \label{lemmaexpsampvar}
      \end{lemma}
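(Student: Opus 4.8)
The plan is to treat the two estimators separately, since they invoke different machinery. For $\hat\sigma^2_{s,\text{samp}}$ I would condition on the random count $\#T_{s}$. Under balanced complete randomization, once $\#T_{s} = m$ is fixed, symmetry of the assignment mechanism implies that the set of units receiving treatment $s$ is a simple random sample without replacement of size $m$ drawn from all $n$ units. Note that $\hat\mu_{s,\text{samp}}$ is exactly the sample mean of the $y_{kcs}$ over this sample, so the inner double sum in $\hat\sigma^2_{s,\text{samp}}$ equals $\frac{n-1}{n}$ times the ordinary sample variance $s^2 = \frac{1}{\#T_{s} - 1}\sum_{c,k} T_{kcs}(y_{kcs} - \hat\mu_{s,\text{samp}})^2$.

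I would then invoke the standard finite-population result that, for sampling without replacement, the sample variance is unbiased for the Bessel-corrected population variance $\frac{1}{n-1}\sum_{c,k}(y_{kcs} - \mu_{s})^2 = \frac{n}{n-1}\sigma_s^2$. The key point is that this conditional expectation, $\E(s^2 \mid \#T_{s} = m) = \frac{n}{n-1}\sigma_s^2$, does not depend on $m$; multiplying by $\frac{n-1}{n}$ gives $\E(\hat\sigma^2_{s,\text{samp}} \mid \#T_{s} = m) = \sigma_s^2$ for every admissible $m$, so the law of total expectation over the two-point distribution of $\#T_{s}$ in \eqref{distnums} immediately closes this case.

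For $\hat\sigma^2_{s,\text{HT}}$ I would instead work with inclusion probabilities directly. By symmetry each unit receives $s$ with marginal probability $\E(T_{kcs}) = \E(\#T_{s})/n = 1/r$, using $\E(\#T_{s}) = n/r$. For a pair of distinct units, conditioning on $\#T_{s} = m$ gives joint inclusion probability $m(m-1)/(n(n-1))$, so $\E(T_{kcs}T_{k'c's}) = \E[\#T_{s}(\#T_{s}-1)]/(n(n-1))$; evaluating the second moment from \eqref{distnums} (equivalently via $\var(\#T_{s}) = z(r-z)/r^2$) yields $\E(T_{kcs}T_{k'c's}) = (n(n-r)+z(r-z))/(r^2 n(n-1))$. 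Taking expectations termwise in the two sums defining $\hat\sigma^2_{s,\text{HT}}$, and writing $A = \sum_{c,k} y_{kcs}^2$ and $B = \sum_{(k,c)\neq(k',c')} y_{kcs}y_{k'c's}$, the estimator's coefficients are designed so that the first term contributes $\frac{n-1}{n^2}A$ and, after the denominator $n^2(n-r)+nz(r-z) = n(n(n-r)+z(r-z))$ cancels against the joint probability, the second contributes $-\frac{1}{n^2}B$.

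It remains to reconcile $\frac{(n-1)A - B}{n^2}$ with $\sigma_s^2$. Since $\big(\sum_{c,k} y_{kcs}\big)^2 = A + B$, definition \eqref{definedomainvar} gives $\sigma_s^2 = \frac{A}{n} - \frac{A+B}{n^2} = \frac{(n-1)A - B}{n^2}$, which matches $\E(\hat\sigma^2_{s,\text{HT}})$ exactly. I expect the main obstacle to be the bookkeeping in the Horvitz-Thompson case: correctly deriving the pairwise inclusion probability from the second moment of $\#T_{s}$ and then verifying that the two denominators collapse precisely to $1/n^2$. The sample-variance case, by contrast, reduces to a citation once the conditional sampling-without-replacement structure and the $m$-independence of the conditional expectation are observed.
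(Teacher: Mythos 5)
Your proposal is correct. The Horvitz--Thompson half is essentially the paper's own argument: you compute $\E(T_{kcs})=1/r$ and the pairwise inclusion probability $\E(T_{kcs}T_{k'c's}) = \E[\#T_s(\#T_s-1)]/(n(n-1)) = (n(n-r)+z(r-z))/(r^2n(n-1))$ from the two-point distribution of $\#T_s$, take expectations termwise, and observe that the estimator's denominators are built to cancel; the final reconciliation $\sigma_s^2 = ((n-1)A - B)/n^2$ is exactly the identity the paper uses. The sample-variance half is where you diverge: the paper never invokes the classical survey-sampling result, but instead expands $\sum_i T_{is}(y_{is}-\hat\mu_{s,\text{samp}})^2 = \sum_i y_{is}^2 T_{is} - \#T_s\,\hat\mu_{s,\text{samp}}^2$ and grinds through expectations of $T_{is}/(\#T_s-1)$ and $T_{is}T_{js}/(\#T_s(\#T_s-1))$ by conditioning on $\#T_s$, watching the $\E(1/(\#T_s-1))$ terms cancel at the end. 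Your route---condition on $\#T_s=m$, note the treated set is then a simple random sample without replacement, cite $\E(s^2\mid \#T_s=m) = \frac{n}{n-1}\sigma_s^2$ independently of $m$, and apply the tower property---is shorter and makes transparent \emph{why} the $\E(1/(\#T_s-1))$ terms must cancel, at the cost of importing the finite-population unbiasedness result rather than deriving it. Both arguments implicitly require $\#T_s\ge 2$ for the sample variance to be defined, a condition the paper also leaves tacit, so this is not a gap peculiar to your write-up.
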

      Recall that, in balanced block randomized designs, treatment is 
      balanced and completely randomized 
      within each block.
      Thus, analogous properties for block-level estimators hold under balanced block randomization.
      Lemmas~\ref{lemmadiffblockests} and~\ref{lemmahtblockests} are proven in 
      Appendix~\ref{proofoflemmasforvarcals}, and Lemma~\ref{lemmaexpsampvar}
      is proven in Appendix~\ref{proofblocklevelvarests}.
    
        Under the NRCM, the covariance $\gamma_{st}$ is not directly estimable;
        such an estimate would require 
        knowledge of potential outcomes under both
        treatment $s$ and treatment $t$ within a single unit.
        However, when blocks contain several replications of
        each treatment, and when potential outcomes satisfy some
        smoothness conditions with respect to the block covariates,
        good estimates of the block-level covariances $\gamma_{cst}$ may be
        obtained.  
        For details, see~\citet{abadie07, imbens11}.
        
    \subsection{Estimating the sample average treatment effect}
      Given any two treatments $s$ and $t$, we wish to estimate the
      \textit{sample average treatment effect of
      $s$ relative to $t$} (SATE$_{st}$),
      denoted $\delta_{st}$.
      The SATE$_{st}$ is a sum of differences of potential outcomes:
      \begin{equation}
        \delta_{st} \equiv \frac 1 n \sum_{c=1}^b\sum_{k = 1}^{n_c}  (y_{kcs} - y_{kct})
        = \sum_{c=1}^b \frac{n_c}{n}\sum_{k = 1}^{n_c}\left(\mu_{cs} -\mu_{ct}\right).
      \end{equation}
      We consider two estimators for $\delta_{st}$:
      the difference-in-means estimator:
      \begin{equation}
        \hat\delta_{st,\text{diff}} \equiv \sum_{c=1}^b\frac {n_c}{n}\sum_{k=1}^{n_c} 
        \left( 
          \frac{y_{kcs}T_{kcs}}{\#T_{cs}} - \frac{y_{kct}T_{kct}}{\# T_{ct}}
        \right)= \sum_{c=1}^b\frac {n_c}{n}\sum_{k=1}^{n_c} 
        \left( 
         \hat\mu_{cs,\text{samp}} - \hat\mu_{ct,\text{samp}}
        \right)
      \end{equation}
      and the Horvitz-Thompson estimator~\citep{horvitz1952generalization}:
      \begin{equation}
        \hat\delta_{st,\text{HT}} \equiv \sum_{c=1}^b\frac {n_c}{n}\sum_{k=1}^{n_c} 
        \left( 
          \frac{y_{kcs}T_{kcs}}{n_c/r} - \frac{y_{kct}T_{kct}}{n_c/r}
        \right)= 
        \sum_{c=1}^b\frac {n_c}{n}\sum_{k=1}^{n_c} 
        \left( 
         \hat\mu_{cs,\text{HT}} - \hat\mu_{ct,\text{HT}}
        \right).
      \end{equation}
      These estimators are shown to be unbiased under
      balanced block randomization in Theorems~\ref{brdiffthm} and~\ref{brhtthm}.            
      
      Properties of these estimators are most easily seen by 
      analyzing the block-level terms.
      Consider first the difference-in-means estimator.
      By linearity of expectations:
       \begin{align}
         \E(\hat\delta_{st, \text{diff}}) &= 
         \sum_{c=1}^b\frac{n_c}n(\E(\hat\mu_{cs,\text{samp}}) - \E(\hat\mu_{ct,\text{samp}})).\label{myexpdiff}\\
         \intertext{When treatment is balanced block randomized, 
         by independence of treatment assignment 
      across blocks:}
       \var(\hat\delta_{st,\text{diff}}) &=
         \sum_{c=1}^b\frac{n^2_c}{n^2}\left[\var(\hat\mu_{cs,\text{samp}}) + \var(\hat\mu_{ct,\text{samp}})
         -2\cov(\hat\mu_{cs,\text{samp}},\hat\mu_{ct,\text{samp}})\right].\label{myvardiff}
	\end{align}
	Linearity of expectations and independence across blocks can also be exploited to
	obtain similar expressions hold for the Horvitz-Thompson estimator.
      
      From Lemmas~\ref{lemmadiffblockests} and~\ref{lemmahtblockests},
      and using~\eqref{myexpdiff} and~\eqref{myvardiff}, 
      we can show that both the difference-in-means estimator and the
      Horvitz-Thompson estimator for the SATE$_{st}$ are unbiased, 
      and we can compute the variance of these estimates.
      \begin{theorem}
        Under balanced block randomization, for any treatments $s$ and $t$ with $s \neq t$:
        \begin{eqnarray}
          \E(\hat\delta_{st,\text{diff}}) &=& \delta_{st},\\
          \var(\hat\delta_{st,\text{diff}}) &=& \sum_{c=1}^b \frac{n^2_c}{n^2}
           \left(\frac{r-1}{n_c-1}(\sigma_{cs}^2+\sigma_{ct}^2)
         + 2\frac{\gamma_{cst}}{n_c-1}\right) \nn&&+ \sum_{c=1}^b \frac{n^2_c}{n^2}
           \left(\frac{rz_c(r-z_c)}{(n_c-1)(n_c-z_c)(n_c+r-z_c)}(\sigma^2_{cs} + \sigma^2_{ct})\right).
        \end{eqnarray}
        \label{brdiffthm}
      \end{theorem}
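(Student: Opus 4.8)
The plan is to treat this theorem as an assembly of facts already in hand: the decompositions \eqref{myexpdiff} and \eqref{myvardiff}, together with the block-level analogues of Lemma~\ref{lemmadiffblockests}. Since balanced block randomization performs balanced complete randomization independently within each block, Lemma~\ref{lemmadiffblockests} applies verbatim at the block level after replacing $n$ by $n_c$, $z$ by $z_c$, and the domain-level quantities $\mu_s,\sigma_s^2,\gamma_{st}$ by their block-level counterparts $\mu_{cs},\sigma_{cs}^2,\gamma_{cst}$. This reindexing, flagged in the remark preceding the theorem, is what does all the probabilistic work.

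First I would establish unbiasedness. Starting from \eqref{myexpdiff}, I substitute the block-level expectation $\E(\hat\mu_{cs,\text{samp}}) = \mu_{cs}$ to obtain $\E(\hat\delta_{st,\text{diff}}) = \sum_{c}\frac{n_c}{n}(\mu_{cs}-\mu_{ct})$, which is exactly $\delta_{st}$ by the second form of its defining equation. This step is immediate.

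Next I would compute the variance, starting from \eqref{myvardiff}, which already encodes the key structural fact that independence of assignment across blocks eliminates all cross-block covariances, leaving the variance of the sum equal to the $n_c^2/n^2$-weighted sum of within-block variances and covariances. Into each bracket I substitute the block-level variance $\var(\hat\mu_{cs,\text{samp}}) = \frac{r-1}{n_c-1}\sigma_{cs}^2 + \frac{rz_c(r-z_c)}{(n_c-1)(n_c-z_c)(n_c+r-z_c)}\sigma_{cs}^2$ for both $s$ and $t$, and the block-level covariance $\cov(\hat\mu_{cs,\text{samp}},\hat\mu_{ct,\text{samp}}) = -\gamma_{cst}/(n_c-1)$. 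The $-2\,\cov$ term then contributes $+2\gamma_{cst}/(n_c-1)$; collecting the $\sigma_{cs}^2+\sigma_{ct}^2$ factors separates the expression into the two displayed sums.

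There is no serious analytic obstacle, since the randomization-theory content is entirely carried by Lemma~\ref{lemmadiffblockests} and by the derivations of \eqref{myexpdiff}--\eqref{myvardiff}. The only points demanding care are bookkeeping: confirming that the block-level lemma is legitimately invoked (which requires within-block assignment to be genuinely balanced and completely randomized, guaranteed by the definition of balanced block randomization), tracking the sign flip so that the negative block-level covariance becomes the positive $+2\gamma_{cst}/(n_c-1)$ term, and applying the substitution $n\mapsto n_c$, $z\mapsto z_c$ consistently in the second sum.
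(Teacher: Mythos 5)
Your proposal is correct and follows exactly the paper's route: the paper derives the theorem by combining the decompositions \eqref{myexpdiff}--\eqref{myvardiff} (expectation by linearity, variance by across-block independence) with the block-level analogue of Lemma~\ref{lemmadiffblockests}, which is precisely the substitution $n\mapsto n_c$, $z\mapsto z_c$, $\sigma_s^2\mapsto\sigma_{cs}^2$, $\gamma_{st}\mapsto\gamma_{cst}$ that you carry out. The sign bookkeeping on the $-2\cov$ term and the grouping into the two displayed sums are handled correctly.
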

      \begin{theorem}
        Under balanced block randomization, for any treatments $s$ and $t$ with $s \neq t$:
        \begin{eqnarray}
          \E(\hat\delta_{st,\text{HT}}) &=& \delta_{st},\\
          \var(\hat\delta_{st,\text{HT}}) &=& 
          \sum_{c=1}^b\frac{n^2_c}{n^2}
            \left( \frac{r-1 }{n_c-1}(\sigma_{cs}^2+\sigma_{ct}^2) + 2\frac{\gamma_{cst}}{n_c-1}\right)\nn&&
            +  \sum_{c=1}^b\frac{z_c(r-z_c)}{n_c^3(r-1)}\sum_{k=1}^{n_c}\sum_{\ell \neq k}\left(
              \frac{r-1}{n_c -1}\left( y_{kcs}y_{\ell cs} + y_{kct}y_{\ell ct}\right) + 2\frac{y_{kcs}y_{\ell ct}}{n_c-1}\right)
             \nn
        \end{eqnarray}
                \label{brhtthm}
      \end{theorem}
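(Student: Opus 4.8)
The plan is to push the entire computation down to the block level, mirroring the difference-in-means derivation in~\eqref{myexpdiff}--\eqref{myvardiff}, and then to invoke the block-level analogues of Lemma~\ref{lemmahtblockests}. Because balanced block randomization assigns treatment in a balanced, completely randomized fashion within each block and independently across blocks, each block-level estimator $\hatmucsht$ satisfies the mean, variance, and covariance identities of Lemma~\ref{lemmahtblockests} with $n$ replaced by $n_c$, $z$ replaced by $z_c$, and all sums restricted to the units of block $c$. The unbiasedness claim is then immediate from linearity of expectation: since $\E(\hatmucsht)=\mu_{cs}$ and $\E(\hatmuctht)=\mu_{ct}$, we obtain $\E(\hat\delta_{st,\text{HT}}) = \sum_{c=1}^b \tfrac{n_c}{n}(\mu_{cs}-\mu_{ct}) = \delta_{st}$ by the definition of $\delta_{st}$.

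For the variance, independence of treatment assignment across blocks makes the $b$ block-level differences $\hatmucsht-\hatmuctht$ mutually independent, so the variance of their $\tfrac{n_c}{n}$-weighted sum is the weighted sum of the individual variances, exactly as in~\eqref{myvardiff}:
\[
  \var(\hat\delta_{st,\text{HT}}) = \sum_{c=1}^b \frac{n_c^2}{n^2}\left[\var(\hatmucsht)+\var(\hatmuctht)-2\cov(\hatmucsht,\hatmuctht)\right].
\]
Substituting the block-level versions of the three identities in Lemma~\ref{lemmahtblockests} writes each bracketed term as a leading piece $\tfrac{r-1}{n_c-1}(\sigma_{cs}^2+\sigma_{ct}^2)+2\tfrac{\gamma_{cst}}{n_c-1}$ plus two cross-product corrections, one arising from the two variances and one from the covariance.

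The remaining and only delicate step is to collect the two correction terms into the single factored sum displayed in the statement. The two variances contribute $\tfrac{z_c(r-z_c)}{n_c^3(n_c-1)}\sum_{k\neq\ell}(y_{kcs}y_{\ell cs}+y_{kct}y_{\ell ct})$, while the $-2\cov$ term contributes $+\tfrac{2 z_c(r-z_c)}{(r-1)n_c^3(n_c-1)}\sum_{k\neq\ell}y_{kcs}y_{\ell ct}$ (the minus sign of the covariance correction flipping to a plus, and the $-\gamma_{cst}/(n_c-1)$ term becoming the $+2\gamma_{cst}/(n_c-1)$ in the leading piece). Pulling out the common factor $\tfrac{z_c(r-z_c)}{n_c^3(r-1)}$ rewrites their sum as $\tfrac{z_c(r-z_c)}{n_c^3(r-1)}\sum_{k\neq\ell}\big(\tfrac{r-1}{n_c-1}(y_{kcs}y_{\ell cs}+y_{kct}y_{\ell ct})+2\tfrac{y_{kcs}y_{\ell ct}}{n_c-1}\big)$, which is precisely the second sum in the theorem. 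I expect this bookkeeping---keeping the $(r-1)$ and $(n_c-1)$ denominators straight when factoring, and tracking the sign changes induced by the $-2\cov$ coefficient---to be the only place where care is needed; everything else is a direct application of results already established.
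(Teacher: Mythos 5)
Your proposal is correct and follows exactly the route the paper itself takes: the paper proves this theorem only by pointing to the block decomposition $\var(\hat\delta_{st,\text{HT}})=\sum_{c}\tfrac{n_c^2}{n^2}\bigl[\var(\hatmucsht)+\var(\hatmuctht)-2\cov(\hatmucsht,\hatmuctht)\bigr]$ (the Horvitz--Thompson analogue of~\eqref{myvardiff}) together with the block-level identities of Lemma~\ref{lemmahtblockests}, which is precisely your argument with the bookkeeping made explicit. One small caveat: the correction terms you collect live inside the $\tfrac{n_c^2}{n^2}[\cdots]$ bracket and therefore should carry that weight, whereas the theorem's displayed second sum omits the factor $\tfrac{n_c^2}{n^2}$ --- this appears to be a typo in the paper's statement that your final ``is precisely the second sum'' claim silently inherits rather than an error in your derivation.
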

   
      Note that, when $r$ divides each $n_c$,
      then $\hat\delta_{st,\text{diff}} =  \hat\delta_{st,\text{HT}}$ and       
      \begin{eqnarray}
        && \var(\hat\delta_{st,\text{diff}}) =  \var(\hat\delta_{st,\text{HT}}) =
         \sum_{c=1}^b \frac{n^2_c}{n^2}
           \left(\frac{r-1}{n_c-1}(\sigma_{cs}^2+\sigma_{ct}^2)
         + 2\frac{\gamma_{cst}}{n_c-1}\right)\!\!.
         \label{varwhenrdividesnc}
        \end{eqnarray}
	When $r$ does not divide each $n_c$, simulation results (not
        presented) seem to suggest that the difference-in-means
        estimator has a smaller variance, especially when block sizes
        are small.
        
      \subsection{Estimating the variance}
        As discussed in Section~\ref{commonparsubsec}, 
        estimation of the variance for both the difference-in-means and
        Horvitz-Thompson estimators is complicated by
        $\gamma_{cst}$ terms, which cannot be estimated without making
        assumptions about the distribution of potential outcomes.
        We give conservative estimates (in expectation) for these variances by first
        deriving unbiased estimators for the block-level variances
        $\var(\hat\mu_{cs,\text{diff}})$ and $\var(\hat\mu_{cs,\text{HT}})$
        and bounding the total variance using the Cauchy-Schwarz inequality and the
        arithmetic mean/geometric mean (AM-GM) inequality~\citep{hardyEtal52}
        on the covariance terms $\gamma$.
        These conservative variances make no distributional assumptions
        on the potential outcomes.
       
      \subsubsection{Block-level variance estimates \label{blocklevelvarsec}}  
      The variance for the block-level estimators (as derived in
      Lemmas~\ref{lemmadiffblockests} and~\ref{lemmahtblockests}) can be estimated 
      unbiasedly several ways.
      We consider the following variance estimators.
            \begin{lemma}
            Define:

            \begin{eqnarray}
        \widehat{ \var}(\hat\mu_{cs,\text{samp}})  &\equiv&  \left( \frac{r-1}{n_c-1} + \frac{rz_c(r-z_c)}{(n_c-1)(n_c-z_c)(n_c+r-z_c)}\right)\hat\sigma^2_{cs, \text{samp}},\\
          \widehat{ \var}(\hat\mu_{cs,\text{HT}}) &\equiv& \frac{r-1}{n_c-1}\hat\sigma_{cs,\text{HT}}^2  \nn&&+ \frac{r^2z_c(r-z_c)}{n_c^3(n_c-r) + n_c^2z_c(r-z_c)}\sum_{k=1}^{n_c}\sum_{\ell\neq k} y_{kcs}y_{\ell cs}T_{kcs}T_{\ell cs},\nn 
           \end{eqnarray}
        Under balanced block randomization, for any treatment $s$:
        \begin{eqnarray}
          \E\left[\widehat{ \var}(\hat\mu_{cs,\text{samp}})\right] = \var(\hat\mu_{cs,\text{samp}}),~~~~
          \E\left[\widehat{ \var}(\hat\mu_{cs,\text{HT}}))\right] = \var(\hat\mu_{cs,\text{HT}}).
        \end{eqnarray}
        \label{blocklevelvarests}
      \end{lemma}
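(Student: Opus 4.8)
The plan is to handle the two estimators separately, since the sample-mean case reduces immediately to results already in hand while the Horvitz-Thompson case requires one genuine second-moment computation. In both cases the argument is a direct, term-by-term evaluation of the expectation using linearity together with the block-level analogs of Lemmas~\ref{lemmadiffblockests}, \ref{lemmahtblockests}, and~\ref{lemmaexpsampvar}, which apply because treatment is balanced and completely randomized within each block.

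For $\widehat{\var}(\hat\mu_{cs,\text{samp}})$ I would observe that the estimator is simply a deterministic constant times $\hat\sigma^2_{cs,\text{samp}}$. Taking expectations and invoking the block-level form of Lemma~\ref{lemmaexpsampvar}, namely $\E(\hat\sigma^2_{cs,\text{samp}}) = \sigma^2_{cs}$, leaves that same constant times $\sigma^2_{cs}$. Comparing with the block-level form of Lemma~\ref{lemmadiffblockests} shows this product is exactly $\var(\hat\mu_{cs,\text{samp}})$, so unbiasedness is immediate.

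For $\widehat{\var}(\hat\mu_{cs,\text{HT}})$ I would split the estimator into its two terms. The first, $\frac{r-1}{n_c-1}\hat\sigma^2_{cs,\text{HT}}$, has expectation $\frac{r-1}{n_c-1}\sigma^2_{cs}$ by Lemma~\ref{lemmaexpsampvar}, matching the leading term of $\var(\hat\mu_{cs,\text{HT}})$ in Lemma~\ref{lemmahtblockests}. For the second term I would push the expectation inside the double sum, so that the task becomes evaluating $\E(T_{kcs}T_{\ell cs})$ for $k \neq \ell$. Conditioning on $\#T_{cs}$ via~\eqref{distnumcs} and using that, given $\#T_{cs} = m$, the probability that two fixed units both receive $s$ is $m(m-1)/[n_c(n_c-1)]$, I would average over the two-point distribution of $\#T_{cs}$ and substitute $\lfloor n_c/r\rfloor = (n_c - z_c)/r$ to obtain
\begin{equation}
  \E(T_{kcs}T_{\ell cs}) = \frac{(n_c - z_c)(n_c + z_c - r)}{r^2\, n_c(n_c-1)}.
\end{equation}
Multiplying this by the prefactor appearing in the estimator and comparing against the cross-product term of $\var(\hat\mu_{cs,\text{HT}})$ then reduces the claim to the polynomial identity $(n_c - z_c)(n_c + z_c - r) = n_c(n_c - r) + z_c(r - z_c)$, both sides of which expand to $n_c^2 - r n_c + r z_c - z_c^2$.

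The main obstacle is the pair-moment calculation and the bookkeeping around it: one must verify that the coefficient $r^2 z_c(r-z_c)/[n_c^3(n_c-r) + n_c^2 z_c(r-z_c)]$ in the estimator was engineered precisely so that its product with $\E(T_{kcs}T_{\ell cs})$ collapses, via the identity above, to the coefficient $z_c(r-z_c)/[n_c^3(n_c-1)]$ of the cross-product term in the true variance. Everything else is linearity of expectation and algebraic substitution.
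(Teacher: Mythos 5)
Your proposal is correct and follows essentially the same route as the paper: the sample-mean case is dispatched exactly as in the appendix (the estimator is a deterministic multiple of $\hat\sigma^2_{cs,\text{samp}}$, so unbiasedness follows from $\E(\hat\sigma^2_{cs,\text{samp}})=\sigma^2_{cs}$), and for the Horvitz--Thompson case the paper likewise reduces everything to the pair moment $\E(T_{kcs}T_{\ell cs})=\bigl(n_c(n_c-r)+z_c(r-z_c)\bigr)/\bigl(n_c(n_c-1)r^2\bigr)$, which it obtains from the first two moments of $\#T_{cs}$ rather than by averaging $m(m-1)/[n_c(n_c-1)]$ directly over the two-point distribution --- an equivalent computation. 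Your polynomial identity $(n_c-z_c)(n_c+z_c-r)=n_c(n_c-r)+z_c(r-z_c)$ and the resulting collapse of the coefficient to $z_c(r-z_c)/[n_c^3(n_c-1)]$ match the paper's cancellation step.
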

      This Lemma is proven in Appendix~\ref{proofblocklevelvarests}.
      
      \subsubsection{Conservative variance estimates for SATE$_{st}$ estimators}
      
      Define the following variance estimators:
      \begin{eqnarray}
          \widehat\var(\hat\delta_{st,\text{diff}}) &\equiv& \sum_{c=1}^b \frac{n_c^2}{n^2}\left[ \left( \frac{r}{n_c-1} + \frac{rz_c(r-z_c)}{(n_c-1)(n_c-z_c)(n_c+r-z_c)}\right)(\hat\sigma^2_{cs, \text{samp}}+ \hat\sigma^2_{ct, \text{samp}})\right],\nn&& \\
         \widehat\var(\hat\delta_{st,\text{HT}}) &\equiv& \sum_{c=1}^b\frac{n_c^2}{n^2}\left[ \frac{r}{n_c-1}(\hat\sigma_{cs,\text{HT}}^2 + \hat\sigma_{ct,\text{HT}}^2)\right.\nn &&+ 
         \frac{r^2z_c(r-z_c)}{n_c^3(n_c-r) + n_c^2z_c(r-z_c)}\sum_{k=1}^{n_c}\sum_{\ell\neq k}\left(y_{kcs}y_{\ell cs}T_{kcs}T_{\ell cs}  +y_{kct}y_{\ell ct}T_{kct}T_{\ell ct}\right)\nn&& + \left.\frac{2r^2z_c(r-z_c)}{n_c^4(r-1) - n_c^2z_c(r-z_c)}\sum_{k=1}^{n_c}\sum_{\ell\neq k} y_{kcs}y_{\ell ct}T_{kcs}T_{\ell ct}\right].
        \end{eqnarray}
       
       We now show that these estimators are conservative (in expectation).  
       First, we begin with the following lemma:
       \begin{lemma}
       Under balanced block randomization, for any treatments $s$ and $t$ with $s \neq t$:
         \begin{eqnarray}
         &&  \E\left(\frac{2r^2z_c(r-z_c)}{n_c^4(r-1) - n_c^2z(r-z)}\sum_{k=1}^{n_c}\sum_{\ell\neq k} y_{kcs}y_{\ell ct}T_{kcs}T_{\ell ct}\right) \nn&&= \frac{2z_c(r-z_c)}{n_c^3(n_c-1)(r-1)}\sum_{k=1}^{n_c}\sum_{\ell\neq k}y_{kcs}y_{\ell ct}.
         \end{eqnarray}
         \label{lemmacrosssum}
       \end{lemma}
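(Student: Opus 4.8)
The plan is to reduce the claim to a single joint assignment probability and then simplify. Since the potential outcomes $y_{kcs}$ are non-random under the NRCM, linearity of expectation lets me move $\E$ inside the finite double sum, so the left-hand side equals
\[
\frac{2r^2z_c(r-z_c)}{n_c^4(r-1) - n_c^2z_c(r-z_c)}\sum_{k=1}^{n_c}\sum_{\ell\neq k} y_{kcs}y_{\ell ct}\,\E(T_{kcs}T_{\ell ct}).
\]
Because $T_{kcs}T_{\ell ct}$ is an indicator, $\E(T_{kcs}T_{\ell ct}) = P(T_{kcs}=1,\,T_{\ell ct}=1)$. The key observation is that balanced complete randomization within block $c$ is invariant under relabeling the $n_c$ units and under relabeling the $r$ treatments; hence for $k \neq \ell$ and $s \neq t$ this probability takes a common value $P$ that does not depend on which distinct units or distinct treatments are chosen. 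It then suffices to evaluate $P$ and confirm that the resulting constant collapses to the right-hand side.

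To evaluate $P$ I relate it to the second moment of the block-level treatment counts. Since a unit cannot receive two treatments, $T_{kcs}T_{kct}=0$ whenever $s\neq t$, so $\#T_{cs}\#T_{ct} = \sum_{k}\sum_{\ell\neq k}T_{kcs}T_{\ell ct}$; taking expectations and using exchangeability across the $n_c(n_c-1)$ ordered pairs gives $\E(\#T_{cs}\#T_{ct}) = n_c(n_c-1)P$. The efficient way to obtain $\E(\#T_{cs}\#T_{ct})$ is to avoid conditioning on which treatments receive the extra unit. Summing over all ordered distinct pairs,
\[
\sum_{s\neq t}\#T_{cs}\#T_{ct} = \Big(\sum_{s}\#T_{cs}\Big)^2 - \sum_{s}\#T_{cs}^2 = n_c^2 - \sum_{s}\#T_{cs}^2 ,
\]
and $\sum_{s}\#T_{cs}^2 = z_c(\lfloor n_c/r\rfloor+1)^2 + (r-z_c)\lfloor n_c/r\rfloor^2$ is \emph{deterministic} under balance, since exactly $z_c$ treatments are replicated $\lfloor n_c/r\rfloor+1$ times and the remaining $r-z_c$ are replicated $\lfloor n_c/r\rfloor$ times regardless of which ones they are. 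By treatment symmetry all $r(r-1)$ summands on the left share the common expectation $\E(\#T_{cs}\#T_{ct})$, so $\E(\#T_{cs}\#T_{ct}) = \big(n_c^2 - \sum_{s}\#T_{cs}^2\big)\big/\big(r(r-1)\big)$.

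The remainder is algebraic simplification. Substituting $\lfloor n_c/r\rfloor = (n_c - z_c)/r$ and expanding yields $\sum_{s}\#T_{cs}^2 = \big(n_c^2 + z_c(r-z_c)\big)/r$, whence $\E(\#T_{cs}\#T_{ct}) = \big(n_c^2(r-1) - z_c(r-z_c)\big)\big/\big(r^2(r-1)\big)$ and therefore $P = \big(n_c^2(r-1) - z_c(r-z_c)\big)\big/\big(r^2 n_c(n_c-1)(r-1)\big)$. Inserting this $P$ into the pulled-out expression and factoring its denominator as $n_c^2\big(n_c^2(r-1) - z_c(r-z_c)\big)$, the common factor $n_c^2(r-1) - z_c(r-z_c)$ together with $r^2$ cancels, leaving exactly $\frac{2z_c(r-z_c)}{n_c^3(n_c-1)(r-1)}\sum_{k}\sum_{\ell\neq k}y_{kcs}y_{\ell ct}$, which is the asserted right-hand side.

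I expect the only genuine obstacle to be justifying the two symmetry reductions cleanly, namely that $P(T_{kcs}=1,\,T_{\ell ct}=1)$ is constant over distinct unit pairs and over distinct treatment pairs; everything afterward is the deterministic-count identity plus bookkeeping. Once that symmetry is secured, computing $\E(\#T_{cs}\#T_{ct})$ through $\sum_{s}\#T_{cs}^2$ sidesteps any need to enumerate the $\binom{r}{z_c}$ choices of which treatments are over-replicated.
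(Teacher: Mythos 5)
Your proof is correct, and it reaches the same two key quantities as the paper --- the common pairwise probability $P=\E(T_{kcs}T_{\ell ct})$ and the second moment $\E(\#T_{cs}\#T_{ct}) = \bigl(n_c^2(r-1)-z_c(r-z_c)\bigr)/\bigl(r^2(r-1)\bigr)$ --- but by a genuinely different sub-derivation. The paper conditions on the pair of counts, uses $\E(T_{kcs}T_{\ell ct}\mid \#T_{cs},\#T_{ct}) = \#T_{cs}\#T_{ct}/(n_c(n_c-1))$, and then computes $\E(\#T_{cs}\#T_{ct})$ by enumerating the three joint outcomes for which of $s,t$ receive the extra replicate (with probabilities $\tfrac{z_c(z_c-1)}{r(r-1)}$, $\tfrac{(r-z_c)(r-z_c-1)}{r(r-1)}$, $\tfrac{2z_c(r-z_c)}{r(r-1)}$), followed by a fairly long simplification. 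You instead invoke unit-exchangeability to get $P = \E(\#T_{cs}\#T_{ct})/(n_c(n_c-1))$ directly, and then exploit the fact that under balance $\sum_s \#T_{cs}^2 = z_c(\lfloor n_c/r\rfloor+1)^2 + (r-z_c)\lfloor n_c/r\rfloor^2$ is \emph{deterministic}, so treatment-label symmetry gives $\E(\#T_{cs}\#T_{ct}) = \bigl(n_c^2 - \sum_s \#T_{cs}^2\bigr)/(r(r-1))$ with no case analysis. Your route is shorter and arguably more illuminating; its only cost is that the two symmetry claims (invariance of balanced complete randomization under permutations of units and of treatment labels) are asserted rather than derived from the explicit uniform distribution over assignments, whereas the paper's conditioning argument makes the unit-level step mechanical. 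Both symmetries do hold here --- the $\binom{r}{z_c}$ factor in the count of equally likely assignments is exactly what makes the treatment labels exchangeable --- so this is a matter of exposition, not a gap. One small point: you silently (and correctly) read the $z(r-z)$ in the lemma's displayed constant as $z_c(r-z_c)$; that is indeed a typo in the statement, consistent with the definition of $\widehat\var(\hat\delta_{st,\text{HT}})$ and with the appendix derivation.
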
 
       This lemma is proved in Appendix~\ref{proofblocklevelvarests}.
       
       Also note, by the Cauchy-Schwarz and the AM-GM inequalities respectively, that:
       \begin{eqnarray}
         \gamma_{cst} \leq \sqrt{ \sigma_{cs}^2\sigma_{ct}^2} \leq \frac{\sigma_{cs}^2 + \sigma_{ct}^2}{2}.
         \label{covboundhere}
       \end{eqnarray}
       The first two terms are equal if and only if there exists constants $a$ and $b$ such that, 
       for all $k \in \{1,\ldots, n_c\}$, $y_{kcs} = a + by_{kct}$.  
       The last two terms are equal if and only if $\sigma_{cs}^2 = \sigma_{ct}^2$.
       Hence, \eqref{covboundhere} is satisfied with equality if and only if there exists a constant $a$ such
       that, for all $k \in \{1,\ldots, n_c\}$, $y_{kcs} = a + y_{kct}$; that is, if and only if 
       treatment shifts the value of the potential outcomes by a constant for all units within block $c$.
       
       \begin{theorem}
         Under balanced block randomization, for any treatments $s$ and $t$ with $s \neq t$:
         \begin{eqnarray}
           \E( \widehat\var(\hat\delta_{st,\text{diff}})  ) \geq \var(\hat\delta_{st,\text{diff}}),~~~~ 
           \E( \widehat\var(\hat\delta_{st,\text{HT}})  ) \geq \var(\hat\delta_{st,\text{HT}}).
           \label{blockvarianceareunbiased}
	\end{eqnarray}
	with equality if and only if, for each block $c$, there is a constant $a_c$ such that,
	for all $k \in \{1, \ldots, n_c\}$, $y_{kcs} = a_c + y_{kct}$.
       $\sigma_{cs}^2 = \sigma_{ct}^2$.
       \end{theorem}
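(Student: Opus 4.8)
The plan is to reduce both inequalities to the same nonnegative quantity and then invoke the covariance bound~\eqref{covboundhere}. Begin with the difference-in-means estimator. Taking expectations and using the block-level version of Lemma~\ref{lemmaexpsampvar}, so that $\E(\hat\sigma^2_{cs,\text{samp}}) = \sigma^2_{cs}$ and likewise for $t$, gives
\begin{equation*}
\E\bigl(\widehat\var(\hat\delta_{st,\text{diff}})\bigr) = \sum_{c=1}^b \frac{n_c^2}{n^2}\left(\frac{r}{n_c-1} + \frac{rz_c(r-z_c)}{(n_c-1)(n_c-z_c)(n_c+r-z_c)}\right)(\sigma^2_{cs} + \sigma^2_{ct}).
\end{equation*}
Subtracting $\var(\hat\delta_{st,\text{diff}})$ as given in Theorem~\ref{brdiffthm}, the $z_c$-dependent terms cancel identically, the coefficient of $(\sigma^2_{cs}+\sigma^2_{ct})$ contributes $[r-(r-1)]/(n_c-1) = 1/(n_c-1)$, and the $2\gamma_{cst}/(n_c-1)$ term of the true variance remains to be subtracted, leaving
\begin{equation*}
\E\bigl(\widehat\var(\hat\delta_{st,\text{diff}})\bigr) - \var(\hat\delta_{st,\text{diff}}) = \sum_{c=1}^b \frac{n_c^2}{n^2(n_c-1)}\bigl[(\sigma^2_{cs} + \sigma^2_{ct}) - 2\gamma_{cst}\bigr].
\end{equation*}

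For the Horvitz-Thompson estimator I take the expectation of the three bracketed pieces separately. The first uses $\E(\hat\sigma^2_{cs,\text{HT}}) = \sigma^2_{cs}$, again from the block-level version of Lemma~\ref{lemmaexpsampvar}. The $y_{kcs}y_{\ell cs}$ and $y_{kct}y_{\ell ct}$ double-sums are handled through the block-level version of Lemma~\ref{blocklevelvarests}: since $\widehat\var(\hat\mu_{cs,\text{HT}})$ is unbiased for $\var(\hat\mu_{cs,\text{HT}})$ and the expectation of its $\hat\sigma^2_{cs,\text{HT}}$ summand is known, the expectation of the weighted $T_{kcs}T_{\ell cs}$ sum equals the non-$\sigma^2$ part of $\var(\hat\mu_{cs,\text{HT}})$ from Lemma~\ref{lemmahtblockests}, namely $\tfrac{z_c(r-z_c)}{n_c^3(n_c-1)}\sum_{k}\sum_{\ell\neq k} y_{kcs}y_{\ell cs}$. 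The cross double-sum is supplied directly by Lemma~\ref{lemmacrosssum}. Assembling these and subtracting $\var(\hat\delta_{st,\text{HT}})$ from Theorem~\ref{brhtthm}, all three double-sum contributions cancel exactly against the matching terms of the true variance, and what survives is again
\begin{equation*}
\E\bigl(\widehat\var(\hat\delta_{st,\text{HT}})\bigr) - \var(\hat\delta_{st,\text{HT}}) = \sum_{c=1}^b \frac{n_c^2}{n^2(n_c-1)}\bigl[(\sigma^2_{cs} + \sigma^2_{ct}) - 2\gamma_{cst}\bigr].
\end{equation*}

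Both inequalities then follow from~\eqref{covboundhere}: each bracket $(\sigma^2_{cs}+\sigma^2_{ct}) - 2\gamma_{cst}$ is nonnegative, and every weight $n_c^2/\bigl(n^2(n_c-1)\bigr)$ is strictly positive because $n_c \geq t^* \geq r \geq 2$. Hence each difference is a sum of nonnegative terms and is itself nonnegative. Since the weights are strictly positive, the difference is zero if and only if every bracket vanishes, i.e.\ $\gamma_{cst} = (\sigma^2_{cs}+\sigma^2_{ct})/2$ in each block $c$. By the equality analysis following~\eqref{covboundhere}, this occurs precisely when there is a constant $a_c$ with $y_{kcs} = a_c + y_{kct}$ for all $k \in \{1,\ldots,n_c\}$, which is the claimed characterization.

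I expect the routine expectation computations to be immediate from the cited lemmas. The main bookkeeping obstacle is the Horvitz-Thompson case, where one must verify that the double-sum terms produced by Lemma~\ref{blocklevelvarests} and Lemma~\ref{lemmacrosssum} appear with exactly the prefactors needed to cancel the double-sum terms in Theorem~\ref{brhtthm}, so that only the $\sigma^2$ and $\gamma_{cst}$ pieces remain for the covariance bound.
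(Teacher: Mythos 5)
Your proposal is correct and takes essentially the same route as the paper: the paper likewise uses Lemma~\ref{blocklevelvarests} and Lemma~\ref{lemmacrosssum} to show that the estimator's expectation equals the expression obtained from Theorems~\ref{brdiffthm} and~\ref{brhtthm} by replacing $2\gamma_{cst}$ with $\sigma_{cs}^2+\sigma_{ct}^2$, and then invokes the bound~\eqref{covboundhere} together with its equality analysis. Your explicit per-block difference $\sum_{c} n_c^2\bigl[(\sigma_{cs}^2+\sigma_{ct}^2)-2\gamma_{cst}\bigr]/\bigl(n^2(n_c-1)\bigr)$ is simply a more streamlined presentation of the paper's comparison of $\var^*_{st,\text{diff}}$ and $\var^*_{st,\text{HT}}$ with the true variances.
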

       
       \begin{proof}
       Define:
       \begin{eqnarray}
         \var^*_{st,\text{diff}} &\equiv& \sum_{c=1}^b \frac{n^2_c}{n^2}
           \left(\frac{r}{n_c-1}(\sigma_{cs}^2+\sigma_{ct}^2)\right)
         \nn&&+ \sum_{c=1}^b \frac{n^2_c}{n^2}
           \left(\frac{rz_c(r-z_c)}{(n_c-1)(n_c-z_c)(n_c+r-z_c)}(\sigma^2_{cs} + \sigma^2_{ct})\right) \\
            \var^*_{st,\text{HT}} &\equiv& \sum_{c=1}^b\frac{n^2_c}{n^2}
            \left( \frac{r}{n_c-1}(\sigma_{cs}^2+\sigma_{ct}^2) \right)\nn&&
            +  \sum_{c=1}^b\frac{z_c(r-z_c)}{n_c^3(r-1)}\sum_{k=1}^{n_c}\sum_{\ell \neq k}\left(
              \frac{r-1}{n_c -1}\left( y_{kcs}y_{\ell cs} + y_{kct}y_{\ell ct}\right) + 2\frac{y_{kcs}y_{\ell ct}}{n_c-1}\right)\nn
       \end{eqnarray}
       By Theorems~\ref{brdiffthm} and~\ref{brhtthm}, and by equation~\eqref{covboundhere}, it follows that:
       \begin{equation}
         \var^*_{st,\text{diff}} \geq  \var(\hat\delta_{st,\text{diff}}),~~~~ 
         \var^*_{st,\text{HT}} \geq \var(\hat\delta_{st,\text{HT}}),
         \label{boundvarstar}
       \end{equation}
       with equality if and only if, for each block $c$, there is a constant $a_c$ such that,
	for all $k \in \{1, \ldots, n_c\}$, $y_{kcs} = a_c + y_{kct}$.
       Moreover, by Lemmas~\ref{blocklevelvarests} and~\ref{lemmacrosssum}, 
       and by linearity of expectations,
       we have that:
       \begin{equation}
          \E( \widehat\var(\hat\delta_{st,\text{diff}})  ) =  \var^*_{st,\text{diff}},~~~~
            \E( \widehat\var(\hat\delta_{st,\text{HT}})  )  =  \var^*_{st,\text{HT}}.
       \end{equation}
       The theorem immediately follows.
      \end{proof} 
      
  \pdfbookmark[1]{References}{sec:references}

\newpage

\section*{Appendices} 
    \appendix
    \section{Proof of Lemmas~\ref{lemmadiffblockests} and~\ref{lemmahtblockests} 
      \label{proofoflemmasforvarcals}}
    The following proofs use methods found in~\citet{cochran1977sampling}
    and~\citet{lohr1999sampling}.
    Additionally, the variance calculations for the sample mean 
    follow~\citet{miratrix2012adjusting} closely.
    To help the reader, we refer each unit by a single index.
    
   For any distinct units $i$ and $j$, and distinct treatments $s$ and $t$, 
    the following expectations hold under complete randomization:
     \begin{eqnarray}
      \E\left(\frac{T_{is}}{\#T_{s}}\right)&=& 
        \E\left[\E\left(\left.\frac{T_{is}}{\#T_{s}}\right| \#T_{s}\right)\right] \nn&=& 
        \E\left(\frac{\frac{\#T_{s}}{n}}{\#T_{s}}\right) = \E\left(\frac{1}{n}\right) = \frac 1 n,\label{exp1samp}\\
      \E\left(\frac{T_{is}}{(\#T_{s})^2}\right) &=& 
        \E\left[\E\left(\left.\frac{T_{is}}{(\#T_{s})^2}\right| \#T_{is}\right)\right] \nn&=& 
        \E\left(\frac{\frac{\#T_{s}}{n}}{(\#T_{s})^2}\right) = \E\left(\frac{1}{n}\frac{1}{\#T_{s}}\right) 
        = \frac 1 n\E\left(\frac{1}{\#T_{s}}\right),\label{exp2samp}\\
      \E\left(\frac{T_{is}T_{js}}{(\#T_{s})^2}\right) &=& 
        \E\left[\E\left(\left.\frac{T_{is}T_{js}}{(\#T_{s})^2}\right| \#T_{is}\right)\right] \nn&=& 
        \E\left(\frac{\frac{\#T_{s}}{n}\frac{\#T_{s}-1}{n-1}}{(\#T_{s})^2}\right)  
          = \E\left(\frac{(\#T_{s})^2 - \#T_{s}}{n(n-1)(\#T_{s})^2}\right)  \nn &=&
          \frac{1}{n(n-1)}\E\left(1-\frac 1{\#T_s}\right) \nn &=&
         \frac{1}{n(n-1)} -   \frac{1}{n(n-1)}\E\left(\frac{1}{\#T_{s}}\right),\label{exp3samp}\\
      \E\left(\frac{T_{is}T_{jt}}{\#T_{s}\#T_{t}}\right) &=& 
        \E\left[\E\left(\left.\frac{T_{is}T_{jt}}{\#T_{s}\#T_{t}}\right| \#T_{s},\#T_{t}\right)\right] \nn&=& 
        \E\left(\frac{\frac{\#T_{s}}{n} \frac{\#T_{t}}{n-1}}{\#T_{s}\#T_{t}}\right) 
        = \E\left(\frac{1}{n(n-1)}\right) = \frac 1 {n(n-1)}.\label{exp4samp}   
    \end{eqnarray}
    
    We first compute the expectation of the block-level estimator
    $\hat{\mu}_{s,\text{samp}}$.  
    By~\ref{exp1samp},
    \begin{equation}
      \E(\hat{\mu}_{s,\text{samp}}) = \E\left(\sum_{i=1}^n \frac{y_{is}T_{is}}{\#T_{s}}\right)
      = \sum_{i=1}^n y_{is}\E\left(\frac{T_{is}}{\#T_{s}}\right)
      =  \sum_{i=1}^n \frac{y_{is}}{n} = \mu_s.
      \label{corrmean}
    \end{equation}
    
    We now derive the variance of this estimator.     
    Observe that, by~\eqref{exp2samp} and~\eqref{exp3samp}:
    \begin{eqnarray}
      &&\E\left(\hat{\mu}_{s,\text{diff}}^2\right)  =  
        \E\left[\left(\sum_{i=1}^n \frac{y_{is}T_{is}}{\#T_{s}}\right)^2\right]\nn &=&
        \E\left[\sum_{i=1}^n \frac{y^2_{is}T^2_{is}}{(\#T_{s})^2}
          + \sum_{i=1}^n\sum_{j\neq i} \frac{y_{is}y_{js}T_{is}T_{js}}{(\#T_{s})^2}\right] =
        \E\left[\sum_{i=1}^n \frac{y^2_{is}T_{is}}{(\#T_{s})^2}
          + \sum_{i=1}^n\sum_{j\neq i} \frac{y_{is}y_{js}T_{is}T_{js}}{(\#T_{s})^2}\right] \nn &=&
        \sum_{i=1}^n y^2_{is}\E\left(\frac{T_{is}}{(\#T_{s})^2}\right)
          + \sum_{i=1}^n\sum_{j\neq i} y_{is}y_{js}\E\left(\frac{T_{is}T_{js}}{(\#T_{s})^2}\right)\nn&=&
        \frac 1 n\E\left(\frac 1 {\#T_s}\right)\sum_{i=1}^n y^2_{is}
          + \sum_{i=1}^n\sum_{j\neq i} y_{is}y_{js}\left(
          \frac 1 {n(n-1)}  - \frac 1 {n(n-1)}\E\left(\frac{1}{\#T_{s}}\right)\right)\nn&=&
        \frac 1 n\E\left(\frac 1 {\#T_s}\right)\sum_{i=1}^n y^2_{is}
          + \left(\left(\sum_{i=1}^ny_{is}\right)^2 - \sum_{i=1}^n y_{is}^2\right)\left(
          \frac 1 {n(n-1)}  - \frac 1 {n(n-1)}\E\left(\frac{1}{\#T_{s}}\right)\right) \nn &=&
        \frac 1{n(n-1)}\left(\sum_{i=1}^ny_{is}\right)^2 - \frac 1{n(n-1)}\sum_{i=1}^n y_{is}^2 \nn &&
          + \E\left(\frac 1 {\#T_s}\right)\left(\left(\frac 1 n + \frac 1 {n(n-1)}\right)\sum_{i=1}^n y_{is}^2
          -  \frac 1{n(n-1)}\left(\sum_{i=1}^ny_{is}\right)^2\right). \label{briefstopsampvariance}
        \end{eqnarray}
        We can simplify the last term in parentheses.  
        \begin{eqnarray}
          &&\left(\frac 1 n + \frac 1 {n(n-1)}\right)\sum_{i=1}^n y_{is}^2
          -  \frac 1{n(n-1)}\left(\sum_{i=1}^ny_{is}\right)^2\nn&=&
          \frac 1 {n-1} \sum_{i=1}^n y_{is}^2
          -  \frac 1{n(n-1)}\left(\sum_{i=1}^ny_{is}\right)^2 \nn & = &
          \frac 1 {n-1} \sum_{i=1}^n y_{is}^2
          -  \frac n{n-1}\left(\sum_{i=1}^n\frac{y_{is}}{n}\right)^2 \nn &=&
					 \frac n {n-1} \sum_{i=1}^n \frac{y_{is}^2}{n}
          -  \frac n{n-1}\left(\sum_{i=1}^n\frac{y_{is}}{n}\right)^2 \nn &=&
					 \frac n {n-1} \left(\sum_{i=1}^n \frac{y_{is}^2}{n}
          -  \left(\sum_{i=1}^n\frac{y_{is}}{n}\right)^2\right) = \frac{n}{n-1}\sigma_{s}^2.
					\label{simplifiedparenstermsampvar}
        \end{eqnarray}
				The last equality is obtained by applying~\eqref{definedomainvar}.
      
			Continuing from~\eqref{briefstopsampvariance} and applying~\eqref{simplifiedparenstermsampvar}, we find that:
			\begin{eqnarray}
			  \E\left(\hat{\mu}_{s,\text{diff}}^2\right) &=&
				 \frac 1{n(n-1)}\left(\sum_{i=1}^ny_{is}\right)^2 - \frac 1{n(n-1)}\sum_{i=1}^n y_{is}^2 \nn &&
          + \E\left(\frac 1 {\#T_s}\right)\left(\left(\frac 1 n + \frac 1 {n(n-1)}\right)\sum_{i=1}^n y_{is}^2
          -  \frac 1{n(n-1)}\left(\sum_{i=1}^ny_{is}\right)^2\right) \nn &=&
					\frac n{n-1}\left(\sum_{i=1}^n\frac{y_{is}}{n}\right)^2 - \frac 1{n-1}\sum_{i=1}^n \frac{y_{is}^2}{n}
					+\E\left(\frac 1 {\#T_s}\right)\frac{n}{n-1}\sigma_{s}^2.
					\label{expsquaredsamplevariance}
			\end{eqnarray}
    
    Since $\var(X) = \E(X^2) - (\E(X))^2$, it follows from~\eqref{expsquaredsamplevariance} and~\eqref{definedomainvar} that:
    \begin{eqnarray}
      \var(\hat{\mu}_{s,\text{diff}}) &=& \E(\hat{\mu}_s^2) - (\E(\hat{\mu}_s))^2\nn & = &
       \frac n{n-1}\left(\sum_{i=1}^n\frac{y_{is}}{n}\right)^2 - \frac 1{n-1}\sum_{i=1}^n \frac{y_{is}^2}{n}
					+\E\left(\frac 1 {\#T_s}\right)\frac{n}{n-1}\sigma_{s}^2 - \left(\sum_{i=1}^n\frac{y_{is}}{n}\right)^2\nn &=&
          \frac 1{n-1}\left(\sum_{i=1}^n\frac{y_{is}}{n}\right)^2 - \frac 1{n-1}\sum_{i=1}^n \frac{y_{is}^2}{n}
					+\E\left(\frac 1 {\#T_s}\right)\frac{n}{n-1}\sigma_{s}^2  \nn &=&
					\frac {-1}{n-1}\left(\frac 1{n-1}\sum_{i=1}^n \frac{y_{is}^2}{n} - \left(\sum_{i=1}^n\frac{y_{is}}{n}\right)^2\right)
					+\E\left(\frac 1 {\#T_s}\right)\frac{n}{n-1}\sigma_{s}^2  \nn &=&
          \frac {-1}{n-1}\sigma_s^2 +\E\left(\frac 1 {\#T_s}\right)\frac{n}{n-1}\sigma_{s}^2 
					= \frac{n}{n-1}\left(\E\left(\frac 1 {\#T_s}\right) - \frac{1}{n}\right)\sigma_s^2.
           \label{samplevarcalceasy}
    \end{eqnarray}
    
    Note that $\lfloor n/r \rfloor = n/r - z/r$. 
    Thus, under complete randomization:
    \begin{eqnarray}
      \E\left(\frac{1}{\#T_s}\right) &=& \frac z r\left(\frac 1 {\lfloor n/r \rfloor + 1}\right)
           + \left(1-\frac z r\right)\left(\frac 1 {\lfloor n/r \rfloor}\right) \nn &=&
         \frac{z\lfloor n/r \rfloor}{r(\lfloor n/r \rfloor)(\lfloor n/r \rfloor + 1)}
           + \frac{(r - z)(\lfloor n/r \rfloor + 1)}{r(\lfloor n/r \rfloor)(\lfloor n/r \rfloor + 1)} \nn&=&
			   \frac{z(n/r - z/r) + (r - z)(n/r - z/r + 1)}{r(n/r - z/r)(n/r - z/r + 1)}
       \nn&=&
					\frac{(1/r)z(n - z) + (1/r)(r - z)(n - z + r)}{(1/r)(n - z)(n +r - z)} 
				\nn&=& \frac{z(n-z) + (r-z)(n - z + r)}{(n-z)(n+r-z)} \nn &=&
		\frac{r(n-z) + r^2 -zr}{(n-z)(n+r-z)} = \frac{nr + r^2 - 2rz}{(n-z)(n+r-z)}.
    \end{eqnarray}
    It follows that:
    \begin{eqnarray}
      \var(\hat{\mu}_{s,\text{diff}}) &=& \frac n {n-1}\sigma_s^2
          \left(\E\left(\frac 1 {\#T_s}\right) - \frac 1 n\right) \nn&=&
        \frac n{n-1}\sigma_s^2
          \left(\frac{nr + r^2 - 2rz}{(n-z)(n+r - z)} - \frac 1 n\right) \nn &=&
         \frac n{n-1}\frac{n^2r+nr^2 - 2nrz - (n-z)(n+r - z)}{n(n-z)(n+r-z)}\sigma_s^2\nn&=&
         \frac{nr(n+r - z)  - nrz - (n-z)(n+r - z)}{(n-1)(n-z)(n+r-z)}\sigma_s^2\nn&=&
				\frac{nr(n+r - z) - rz(n + r - z) - (n-z)(n+r - z)+ rz(r-z)}{(n-1)(n-z)(n+r-z)}\sigma_s^2\nn&=&
				\frac{(nr- rz -n + z)(n+r - z)+ rz(r-z)}{(n-1)(n-z)(n+r-z)}\sigma_s^2\nn&=&
				\frac{(r-1)(n-z)(n+r - z)+ rz(r-z)}{(n-1)(n-z)(n+r-z)}\sigma_s^2\nn&=&
				\left(\frac{r-1}{n-1} + \frac{rz(r-z)}{(n-1)(n-z)(n+r-z)}\right)\sigma_s^2.
    \end{eqnarray}
    We now derive covariances of this estimator. 
    Note that:
    \begin{eqnarray}
      && \E\left(\sum_{i=1}^n \frac{y_{is}T_{is}}
          {\#T_{s}}\sum_{i=1}^n\frac{y_{it}T_{it}}{\#T_{t}}\right)\nn &=&
         \E\left(\sum_{i=1}^n\sum_{j\neq i} \frac{y_{is}y_{jt}T_{is}T_{jt}}{\#T_{s}\#T_{t}}\right)
           +  \E\left(\sum_{i=1}^n\frac{y_{is}y_{it}T_{is}T_{it}}{\#T_{s}\#T_{t}}\right) \nn &=&
         \sum_{i=1}^n\sum_{j\neq i}y_{is}y_{jt} \E\left(\frac{T_{is}T_{jt}}{\#T_{s}\#T_{t}}\right)
           + \sum_{i=1}^ny_{is}y_{it}\E\left(\frac{ T_{is}T_{it}}{\#T_{s}\#T_{t}}\right) \nn&=&
         \sum_{i=1}^n\sum_{j\neq i}y_{is}y_{jt} \frac 1 {n(n-1)}+ 0 =
           \sum_{i=1}^n\sum_{j\neq i}\frac{y_{is}y_{jt}}{n(n-1)}.	
    \end{eqnarray}

    Recall $\cov(X,Y) = \E(XY) -\E(X)\E(Y)$.
    It follows that:
    \begin{eqnarray}
       \cov(\hat \mu_{s,\text{diff}},\hat \mu_{t,\text{diff}}) &=&
         \E\left(\sum_{i=1}^n \frac{y_{is}T_{is}}{\#T_{is}}\sum_{i=1}^n\frac{y_{it}T_{it}}{\#T_{it}}\right)
           - \sum_{i=1}^n \frac{y_{is}}{n}\sum_{i=1}^n \frac{y_{it}}{n} \nn &=&
         \sum_{i=1}^n\sum_{j\neq i}\frac{y_{is}y_{jt}}{n(n-1)} - 
           \sum_{i=1}^n \frac{y_{is}}{n}\sum_{i=1}^n \frac{y_{it}}{n} \nn &=&
         \frac{1}{n(n-1)}\sum_{i=1}^n y_{is}\sum_{i=1}^n y_{it} -
           \frac{1}{n(n-1)}\sum_{i=1}^n y_{is}y_{it} -
           \frac{1}{n^2}\sum_{i=1}^n y_{is}\sum_{i=1}^n y_{it}\nn&=&
         \frac{1}{n^2(n-1)}\sum_{i=1}^n y_{is}\sum_{i=1}^n y_{it} 
         -  \frac{1}{n(n-1)}\sum_{i=1}^n y_{is}y_{it} \nn &=&
         \frac{-1}{n-1}\left(\sum_{i=1}^n \frac{y_{is}y_{it}}{n} - 
           \sum_{i=1}^n \frac{y_{is}}n\sum_{i=1}^n\frac{ y_{it}}n \right) = \frac{-\gamma_{st}}{n-1}.
    \end{eqnarray}
    Our derivation of the variance and covariance
		of the sample mean show that 
		the variance and covariate expressions
		derived in~\citet{miratrix2012adjusting}
		are incorrect by a factor of $\frac{n}{n-1}$.

  We now turn our attention to the Horvitz-Thompson estimator.
 For any distinct units $i$ and $j$, and distinct treatments $s$ and $t$, 
    the following expectations hold under complete randomization:
  \begin{eqnarray}
	  \E\left(\#T_s\right) & = & \E\left(\sum_{i=1}^nT_{is}\right)
	  = \sum_{i=1}^n \E(T_{is}) = \sum_{i=1}^n 1/r = n/r, \label{expHT1} \\
    \E\left((\#T_s)^2\right) & = &  \frac z r \left(\lfloor n/r \rfloor + 1\right)^2
           + \left(1 - \frac z r\right)\left(\lfloor n/r \rfloor\right)^2  \nn &=&
         \frac z r\left((\lfloor n/r \rfloor)^2 + 2\lfloor n/r \rfloor + 1\right)
           + \left(1 - \frac z r\right)(\lfloor n/r \rfloor)^2 \nn &=&
         (\lfloor n/r \rfloor)^2 + \frac {2z} r\lfloor n/r \rfloor + \frac z r 
           =(\lfloor n/r \rfloor + z/r)^2 + (z/r - (z/r)^2) \nn &=&
         (n/r)^2 + z/r(1-z/r),\label{expHT2}
 \end{eqnarray}
\begin{eqnarray}    
    \E\left(\#T_s\#T_t\right) & = &  \frac{z(z-1)}{r(r-1)}\left(\lfloor n/r \rfloor + 1\right)^2 + 
         \frac{(r-z)(r-z-1)}{r(r-1)}\left(\lfloor n/r \rfloor\right)^2\nn &&+
         \frac{2z(r-z)}{r(r-1)}\left(\lfloor n/r \rfloor + 1\right)\left(\lfloor n/r \rfloor\right)\nn &=&
         \frac 1 {r(r-1)}\left(
           \begin{array}{l}
           z(z-1)((\lfloor n/r\rfloor)^2 + 2\lfloor n/r \rfloor + 1) \\
           +(r-z)(r-z-1)(\lfloor n/r \rfloor)^2\\
           +2z(r-z)((\lfloor n/r\rfloor)^2 + \lfloor n/r\rfloor) 
           \end{array}
         \right)\nn&=&
         \frac 1 {r(r-1)}\left(
           \begin{array}{l}
           (z(z-1)+(r-z)(r-z-1) + 2z(r-z))(\lfloor n/r\rfloor)^2\\
            +(2z(z-1) +2z(r-z))\lfloor n/r \rfloor\\
            +z(z-1)
           \end{array}
         \right)\nn&=&
          \frac 1 {r(r-1)}\left(
           \begin{array}{l}
           (z(z-1)+(r-z)(r+z-1))(\lfloor n/r\rfloor)^2\\
            +(2z(r-1))\lfloor n/r \rfloor
            +z(z-1)
           \end{array}\right) \nn&=&
           \frac 1 {r(r-1)}\left(
           \begin{array}{l}
           (z^2 - z + r^2 -z^2 -r  +z)(n/r - z/r)^2\\
            +(2z(r-1))(n/r - z/r)
            +z(z-1)
           \end{array}\right) \nn&=&
            \frac 1 {r^3(r-1)}\left(
           (r^2 -r )(n - z)^2
            +r(r-1)2z(n - z)
            +r^2z(z-1)
         \right) \nn&=&
            \frac 1 {r^3(r-1)}\left(r(r-1)\left((n-z)^2 + 2z(n-z)\right) + r^2z(z-1)\right)\nn&=&
             \frac 1 {r^2(r-1)}\left((r-1)\left((n-z)^2 + 2z(n-z) + z^2\right) -z^2(r-1) + rz(z-1)\right)\nn&=&
             \frac 1 {r^2(r-1)}\left((r-1)(n-z+z)^2 -rz +z^2\right)\nn&=&
             \frac {n^2(r-1) - z(r-z)} {r^2(r-1)}.\label{expHT3}
         \end{eqnarray}
         
         \pagebreak
         Using these expressions, we can compute the following expectations under
         complete randomization, assuming distinct treatments $s$ and $t$ and 
         distinct units $i$ and $j$:
         \begin{eqnarray}
        \E(T_{is}T_{js}) &=&  =  \E\left(\E(T_{is}T_{js} | \#T_s )\right) = 
         \E\left(\frac{\#T_s(\#T_s-1)}{n(n-1)}\right) \nn&=&
         \frac{\E[(\#T_s)^2] - \E[\#T_s]}{n(n-1)} =
          \frac{ (n/r)^2 + z/r(1-z/r) - n/r}{n(n-1)} \nn &=&
        \frac{ (n/r)^2 - (z/r)^2 - (n/r - z/r)}{n(n-1)} \nn &=& 
        \frac{ n^2 - z^2 - (nr - zr)}{n(n-1)r^2} \nn &=& 
        \frac{ (n - z)(n + z) - (nr - zr)}{n(n-1)r^2} \nn&=&
        \frac{ (n - z)(n  -r + z)}{n(n-1r^2)}  = \frac{n(n-r) + z(r-z)}{n(n-1)r^2}, 
          \label{jointindicatexp1} \\
        \E(T_{is}T_{jt}) & = & \E[\E(T_{is}T_{jt}| \#T_s,\#T_t)] \nn &=&
         \E\left(\frac{\#T_s\#T_t}{n(n-1)}\right)  =
        \frac{ \E(\#T_s\#T_t)}{n(n-1)} \nn&=&
        \frac{n^2(r-1) - z(r-z)}{n(n-1)r^2(r-1)}.  \label{jointindicateexp2}
    \end{eqnarray}
    
  Under
  complete randomization, the expectation of the Horvitz-Thompson estimator is:
  \begin{equation}
    \E(\hat\mu_{s,\text{HT}}) = \E\left(\sum_{i=1}^n \frac{y_{is}T_{is}}{n/r}\right)
      =\sum_{i=1}^n \frac{y_{is}\E(T_{is})}{n/r} 
      =\sum_{i=1}^n \frac{y_{is}(1/r)}{n/r}
      =\sum_{i=1}^n \frac{y_{is}}{n} = \mu_s.
  \end{equation}
  
  The variance of this estimator is derived as follows.
  By~\eqref{jointindicatexp1}:
  \begin{eqnarray}
   \E(\hat\mu_{s,\text{HT}}^2) &=& \E\left(\left(\sum_{i=1}^n \frac{y_{is}T_{is}}{n/r}\right)^2\right)
          \nn&=&
        \left(\frac {r^2} {n^2}\right)\left(\E\left(\sum_{i=1}^n y^2_{is}T^2_{is}\right)
          + \E\left(\sum_{i=1}^n\sum_{j\neq i} y_{is}y_{js}T_{is}T_{js} \right)\right)\nn &=&
        \left(\frac {r^2} {n^2}\right)\left(\E\left(\sum_{i=1}^n y^2_{is}T_{is}\right)
          + \E\left(\sum_{i=1}^n\sum_{j\neq i} y_{is}y_{js}T_{is}T_{js} \right)\right)\nn &=&
        \left(\frac {r^2} {n^2}\right)\left(\sum_{i=1}^n y^2_{is}\E(T_{is})
          + \sum_{i=1}^n\sum_{j\neq i} y_{is}y_{js}\E(T_{is}T_{js}) \right) \nn &=&
        \left(\frac {r^2}{n^2}\right)\left(\sum_{i=1}^n \frac{y^2_{is}}{r}
          + \sum_{i=1}^n\sum_{j\neq i} y_{is}y_{js}
           \frac{ n(n-r) + z(r-z)}{r^2n(n-1)} \right)\nn &=&
        \left(\frac {r^2}{n^2}\right)\left[\sum_{i=1}^n \frac{y^2_{is}}{r}
          + \left( \frac{ n(n-r) + z(r-z)}{r^2n(n-1)}  \right)
          \left( \left(\sum_{i=1}^n y_{is}\right)^2 - \sum_{i=1}^n y_{is}^2 
          \right)\right]\nn &=&
        \left(\frac{r}{n^2} - \frac{n(n-r) + z(r-z)}{n^3(n-1)}\right)\sum_{i=1}^n y_{is}^2 \nn&&
          + \left(\frac{ n(n-r) + z(r-z)}{n^3(n-1)}\right)\left(\sum_{i=1}^n y_{is}\right)^2  \nn &=&
        \left(\frac{rn(n-1)- n(n-r) - z(r-z)}{n^2(n-1)}\right)\sum_{i=1}^n \frac{y_{is}^2}{n} \nn&&
          + \left(\frac{ n(n-r) + z(r-z)}{n(n-1)}\right)\left(\sum_{i=1}^n \frac{y_{is}}n\right)^2 \nn&=&
        \left(\frac{n^2r -n^2 -z(r-z)}{n^2(n-1)}\right)\sum_{i=1}^n \frac{y_{is}^2}{n} 
          + \left(\frac{ n(n-r) + z(r-z)}{n(n-1)}\right)\left(\sum_{i=1}^n \frac{y_{is}}n\right)^2    \nn &=&
           \left(\frac{n^2(r-1) -z(r-z)}{n^2(n-1)}\right)\sum_{i=1}^n \frac{y_{is}^2}{n} 
          + \left(\frac{ n(n-r) + z(r-z)}{n(n-1)}\right)\left(\sum_{i=1}^n \frac{y_{is}}n\right)^2.  \nn
          \label{goodbreakpointht}
    \end{eqnarray}

    It follows by~\eqref{goodbreakpointht} and~\eqref{definedomainvar} that:
    \begin{eqnarray}
      && \var(\hat \mu_{s,\text{HT}}) = \E(\hat\mu_{s,\text{HT}}^2) - (\E(\hat\mu_{s,\text{HT}}))^2 \nn &=&
        \left(\frac{n^2(r-1) -z(r-z)}{n^2(n-1)}\right)\sum_{i=1}^n \frac{y_{is}^2}{n} 
           \nn&& + \left(\frac{n(n-r) + z(r-z)}{n(n-1)}\right)\left(\sum_{i=1}^n \frac{y_{is}}n\right)^2
          - \left(\sum_{i=1}^n \frac{y_{is}}n\right)^2 \nn&=&
       \left(\frac{n^2(r-1) -z(r-z)}{n^2(n-1)}\right)\sum_{i=1}^n \frac{y_{is}^2}{n} \nn&&
          + \left(\frac{n(n-r) + z(r-z)- n(n-1)}{n(n-1)}\right)
          \left(\sum_{i=1}^n \frac{y_{is}}n\right)^2 \nn&=&
          \left(\frac{n^2(r-1) -z(r-z)}{n^2(n-1)}\right)\sum_{i=1}^n \frac{y_{is}^2}{n}+ \left(\frac{n(1-r) + z(r-z)}{n(n-1)}\right)
          \left(\sum_{i=1}^n \frac{y_{is}}n\right)^2 \nn&=&
        \left(\frac{n^2(r-1) -z(r-z)}{n^2(n-1)}\right)\sum_{i=1}^n \frac{y_{is}^2}{n} 
                  - \left(\frac{n(r-1) - z(r-z)}{n(n-1)}\right)
          \left(\sum_{i=1}^n \frac{y_{is}}n\right)^2 \nn&=&
          \frac{r-1}{n-1}\left(\sum_{i=1}^n \frac{y_{is}^2}{n} - \left(\sum_{i=1}^n \frac{y_{is}}n\right)^2\right)  - \frac{z(r-z)}{n^3(n-1)}\left( \sum_{i=1}^n y^2_{is} - \left(\sum_{i=1}^n y_{is}\right)^2\right) \nn&=&
          \frac{r-1}{n-1}\sigma_s^2  - \frac{z(r-z)}{n^3(n-1)}\left( -\sum_{i=1}^n\sum_{j\neq i} y_{is}y_{js}\right)            \nn&=&             
           \frac{r-1}{n-1}\sigma_s^2  + \frac{z(r-z)}{n^3(n-1)}\sum_{i=1}^n\sum_{j\neq i} y_{is}y_{js}.
         \label{decomposedfurtherht}
    \end{eqnarray}
  
    Now we derive the covariance.
    Note that:
    \begin{eqnarray}
      && \E\left(\sum_{i=1}^n \frac{y_{is}T_{is}}{n/r}\sum_{i=1}^n\frac{y_{it}T_{it}}{n/r}\right)\nn&=& 
        \E\left(\sum_{i=1}^n\sum_{j\neq i} \frac{y_{is}y_{jt}T_{is}T_{jt}}{(n/r)^2}\right)
          + \E\left(\sum_{i=1}^n\frac{y_{is}y_{it}T_{is}T_{it}}{(n/r)^2}\right) \nn &=&
        \sum_{i=1}^n\sum_{j\neq i} \frac{y_{is}y_{jt}}{(n/r)^2}\E\left(T_{is}T_{jt}\right)
          + \sum_{i=1}^n\frac{y_{is}y_{it}}{(n/r)^2}\E\left(T_{is}T_{it}\right) \nn &=&
         \sum_{i=1}^n\sum_{j\neq i} \frac{y_{is}y_{jt}}{(n/r)^2}
           \frac{n^2(r-1) - z(r-z)}{n(n-1)r^2(r-1)} + 0\nn &=&
         \sum_{i=1}^n\sum_{j\neq i} \frac{y_{is}y_{jt}(n^2(r-1) - z(r-z))}{n^3(n-1)(r-1)} 
          \nn&=&
         \sum_{i=1}^n\sum_{j\neq i} 
           \frac{y_{is}y_{jt}}{n(n-1)} - \sum_{i=1}^n\sum_{j\neq i}\frac{y_{is}y_{jt}z(r-z)}{n^3(n-1)(r-1)}.\label{sumexpecteasysubform} 
    \end{eqnarray}
  
  Thus, using $\cov(X,Y) = \E(XY) - \E(X)\E(Y)$ and 
  applying~\eqref{sumexpecteasysubform} and~\eqref{definedomaincov}, we have:
     \begin{eqnarray}
       \cov(\hat \mu_{s,\text{HT}},\hat \mu_{t,\text{HT}}) &=& 
       \E(\hat \mu_{s,\text{HT}}\hat \mu_{t,\text{HT}}) - \E(\hat \mu_{s,\text{HT}})\E(\hat \mu_{t,\text{HT}})\nn&=&
         \E\left(\sum_{i=1}^n \frac{y_{is}T_{is}}{n/r}\sum_{i=1}^n\frac{y_{it}T_{it}}{n/r}\right)
           - \sum_{i=1}^n \frac{y_{is}}{n}\sum_{i=1}^n \frac{y_{it}}{n} \nn &=&
         \sum_{i=1}^n\sum_{j\neq i}\frac{y_{is}y_{jt}}{n(n-1)} 
           - \sum_{i=1}^n\sum_{j\neq i}\frac{y_{is}y_{jt}z(r-z)}{n^3(n-1)(r-1)} - 
           \sum_{i=1}^n \frac{y_{is}}{n}\sum_{i=1}^n \frac{y_{it}}{n} \nn &=&
         \frac{-\gamma_{st}}{n-1} - \sum_{i=1}^n\sum_{j\neq i}\frac{y_{is}y_{jt}z(r-z)}{n^3(n-1)(r-1)}.
    \end{eqnarray}

  This proves the two lemmas.
  \section{Proof of Lemmas~\ref{blocklevelvarests} and~\ref{lemmacrosssum} 
    \label{proofblocklevelvarests}}
  To help the reader, we suppress the block index in the following derivations,
   identifying units by a single index.  
   
   Note that, under complete randomization and for distinct treatments $s$ and $t$
   and distinct units $i$ and $j$,
   the following expectations hold:
  \begin{eqnarray}
    \E\left(\frac{T_{is}}{\#T_{s} - 1}\right) &=& 
        \E\left[\E\left(\left.\frac{T_{is}}{\#T_{s}-1}\right| \#T_{s}\right)\right] \nn&=& 
        \E\left(\frac{\frac{\#T_{s}}{n}}{\#T_{s} -1}\right)= 
        \frac 1 n\E\left(\frac{\#T_{s}}{\#T_{s}-1}\right)\nn& = &
        \frac 1 n\E\left(\frac{\#T_{s} - 1}{\#T_{s}-1}\right) + \frac 1 n\E\left(\frac{1}{\#T_{s}-1}\right)\nn&=&
        \frac 1 n + \frac 1 n\E\left(\frac{1}{\#T_{s}-1}\right),\label{overnminus1samp1}\\
      \E\left(\frac{T_{is}}{\#T_s(\#T_{s} - 1)}\right) &=& 
        \E\left[\E\left(\left.\frac{T_{is}}{\#T_s(\#T_{s} - 1)}\right| \#T_{s}\right)\right] \nn&=& 
        \E\left(\frac{\frac{\#T_{s}}{n}}{\#T_s(\#T_{s} - 1)}\right)= 
        \frac 1 n\E\left(\frac{\#T_{s}}{\#T_s(\#T_{s} - 1)}\right)\nn& = &
        \frac 1 n\E\left(\frac{1}{\#T_{s}-1}\right),\label{overnminus1samp2}\\
      \E\left(\frac{T_{is}T_{js}}{\#T_{s}(\#T_s - 1)}\right) &=&
        \E\left[\E\left(\left.\frac{T_{is}T_{js}}{\#T_{s}(\#T_s - 1)}\right| \#T_{s}\right)\right] \nn&=& 
        \E\left(\frac{\frac{\#T_{s}}{n}\frac{\#T_s-1}{n-1}}{\#T_{s}(\#T_s - 1)}\right)\nn&=& 
        \frac 1 {n(n-1)}\E\left(\frac{\#T_{s}(\#T_s - 1)}{\#T_{s}(\#T_s - 1)}\right) = \frac 1 {n(n-1)}.
        \label{overnminus1samp3}
   \end{eqnarray}
   We show that $\E(\hat \sigma^2_{s,\text{samp}}) = \sigma^2_s$.
   The fact that $\E\left[\widehat{ \var}(\hat\mu_{s,\text{samp}})\right] = \var(\hat\mu_{s,\text{samp}})$
   follows immediately.
  
	First, note that:
	\begin{eqnarray}
    && \sum_{i=1}^{n}T_{is}\left(y_{is}T_{is}- \sum_{i=1}^{n}\frac{y_{is}T_{is}}{\#T_{s}}\right)^2
    \nonumber\\
      &=&   \sum_{i=1}^{n}T_{is}(y_{is}T_{is})^2- 
      2\sum_{i=1}^{n}T_{is}\left(y_{is}T_{is}\sum_{i=1}^{n}
       \frac{y_{is}T_{is}}{\#T_{s}}\right) + 
       \sum_{i=1}^{n}T_{is}\left(\sum_{i=1}^{n} \frac{y_{is}T_{is}}{\#T_{s}}\right)^2\nonumber\\
      &=& \sum_{i=1}^{n} y^2_{is}T_{is} - 
      2\sum_{i=1}^{n} \left(y_{is}T_{is}\sum_{i=1}^{n}
       \frac{y_{is}T_{is}}{\#T_{s}}\right)
       +  \sum_{i=1}^{n}T_{is}\left(\sum_{i=1}^{n} \frac{y_{is}T_{is}}{\#T_{s}}\right)^2\nonumber\\
        &=&\sum_{i=1}^{n} y^2_{is}T_{is}
          - 2\#T_{s}\left( \sum_{i=1}^{n} \frac{y_{is}T_{is}}{\#T_{s}}\right)^2
          + \#T_{s}\left( \sum_{i=1}^{n} \frac{y_{is}T_{is}}{\#T_{s}}\right)^2\nonumber\\
        &=&\sum_{i=1}^{n} y^2_{is}T_{is}
          - \#T_{s}\left( \sum_{i=1}^{n} \frac{y_{is}T_{is}}{\#T_{s}}\right)^2.
  \end{eqnarray}
	
	Thus,
  \begin{eqnarray}
    \E(\hat \sigma^2_{s,\text{diff}}) &=&\E\left(\frac{n-1}{n}\sum_{i=1}^{n}\frac{T_{is}\left(y_{is} - \sum_{i=1}^n\frac{y_{is}T_{is}}{\#T_{s}}\right)^2}{\#T_{s} -1} \right) \nn&=&
    \frac{n-1}{n}\E\left(\frac{\sum_{i=1}^{n} y^2_{is}T_{is}
          - \#T_{s}\left( \sum_{i=1}^{n} \frac{y_{is}T_{is}}{\#T_{s}}\right)^2}{\#T_s - 1}\right) \nn&=&
          \frac{n-1}{n}\left(\sum_{i=1}^n y^2_{is}\E\left(\frac{T_{is}}{\#T_s - 1}\right) - 
          \E\left( \frac{\left(\sum_{i=1}^n y_{is}T_{is}\right)^2}{\#T_{is}(\#T_{is} - 1)}\right)\right).\nn
  \end{eqnarray}
  By~\eqref{overnminus1samp1}:
  \begin{eqnarray}
     \sum_{i=1}^n y^2_{is}\E\left(\frac{T_{is}}{\#T_s - 1}\right) = \frac{1}{n}\sum_{i=1}^n y_{is}^2
     + \frac{1}{n}\E\left(\frac{1}{\#T_s - 1}\right)\sum_{i=1}^n y_{is}^2.
     \label{simpsampvarstop1}
  \end{eqnarray}
  
  By~\eqref{overnminus1samp2} and~\eqref{overnminus1samp3}:
  \begin{eqnarray}
   && \E\left( \frac{\left(\sum_{i=1}^n y_{is}T_{is}\right)^2}{\#T_{is}(\#T_{is} - 1)}\right) \nn &=&
    \E\left( \frac{\sum_{i=1}^n (y_{is}T_{is})^2}{\#T_{is}(\#T_{is} - 1)}\right) +  \E\left( \frac{\sum_{i=1}^n\sum_{j\neq i} y_{is}y_{js}T_{is}T_{js}}{\#T_{is}(\#T_{is} - 1)}\right) \nn &=&
    \sum_{i=1}^n y^2_{is}\E\left(\frac{T_{is}}{\#T_{is}(\#T_{is} - 1)}\right) + 
    \sum_{j\neq i} y_{is}y_{js}\E\left(\frac{T_{is}T_{js}}{\#T_{is}(\#T_{is} - 1)}\right) \nn &=&
     \frac{1}{n}\E\left(\frac{1}{\#T_s - 1}\right)\sum_{i=1}^n y_{is}^2 + 
    \frac{1}{n(n-1)}\sum_{i=1}^n \sum_{j\neq i} y_{is}y_{js} \nn&=&
    \frac{1}{n}\E\left(\frac{1}{\#T_s - 1}\right)\sum_{i=1}^n y_{is}^2
    + \frac{1}{n(n-1)}\left(\sum_{i=1}^n y_{is}\right)^2 - \frac{1}{n(n-1)}\sum_{i=1}^n y_{is}^2.\nn
    \label{simpsampvarstop2}
  \end{eqnarray}
  
  Thus, by~\eqref{simpsampvarstop1},~\eqref{simpsampvarstop2}, and~\eqref{definedomainvar}, it follows that:
  \begin{eqnarray}
     \E(\hat \sigma^2_{s,\text{diff}}) &=&
     \frac{n-1}{n}\left(\sum_{i=1}^n y^2_{is}\E\left(\frac{T_{is}}{\#T_s - 1}\right) - 
          \E\left( \frac{\left(\sum_{i=1}^n y_{is}T_{is}\right)^2}{\#T_{is}(\#T_{is} - 1)}\right)\right)\nn&=&
        \frac{n-1}{n}\left( \frac{1}{n}\sum_{i=1}^n y_{is}^2
     + \frac{1}{n}\E\left(\frac{1}{\#T_s - 1}\right)\sum_{i=1}^n y_{is}^2\right)\nn&&
     -\frac{n-1}{n}\left(\frac{1}{n}\E\left(\frac{1}{\#T_s - 1}\right)\sum_{i=1}^n y_{is}^2
    - \frac{1}{n(n-1)}\left(\sum_{i=1}^n y_{is}\right)^2 + \frac{1}{n(n-1)}\sum_{i=1}^n y_{is}^2\right)\nn&=&
    \frac{n-1}{n}\left(\frac{1}{n-1}\sum_{i=1}^n y_{is}^2 - \frac{1}{n(n-1)}\left(\sum_{i=1}^n y_{is}\right)^2\right) \nn&=&
    \frac{n-1}{n}\left(\frac{n}{n-1}\sum_{i=1}^n \frac{y_{is}^2}{n} - \frac{n}{n-1}\left(\sum_{i=1}^n \frac{y_{is}}{n}\right)^2\right) \nn&=&
    \frac{n-1}{n}\left(\frac{n}{n-1}\sigma^2_s\right) = \sigma^2_s.
  \end{eqnarray}
  
  We now focus on estimating the variance of Horvitz-Thompson estimators.
  By~\eqref{jointindicatexp1}, 
  for any treatment $s$, the following expectation holds under complete randomization.
  \begin{eqnarray}
      &&\E\left(\frac{n(n-1)r^2}{n(n-r) + z(r-z)} \sum_{i=1}^n\sum_{j\neq i} y_{is}y_{js}T_{is}T_{js}\right) \nn&=&
        \frac{n(n-1)r^2}{n(n-r) + z(r-z)}\sum_{i=1}^n \sum_{j\neq i} y_{is}y_{js}\E(T_{is}T_{js}) \nn&=&
         \frac{n(n-1)r^2}{n(n-r) + z(r-z)} \sum_{i=1}^n\sum_{j\neq i} y_{is}y_{js}\frac{n(n-r) + z(r-z)}{n(n-1)r^2} \nn&=&
          \sum_{i=1}^n\sum_{j\neq i} y_{is}y_{js},\label{crossjointexpectation1}
  \end{eqnarray}
  
  Applying~\eqref{crossjointexpectation1}, we show unbiasedness of the Horvitz-Thompson
  variance estimator under complete randomization:
  \begin{eqnarray}
       \E\left(\hat\sigma^2_{s,\text{HT}}\right) &=& \E\left(\frac{(n-1)r}{n^2}\sum_{i=1}^n y^2_{is}T_{is}\right) \nn&&- \E\left(\frac {(n-1)r^2} {n^2(n-r) + nz(r-z)} \sum_{i=1}^n\sum_{j\neq i} y_{is}y_{js}T_{is}T_{js}\right) \nn &=&
       \frac{(n-1)r}{n^2}\sum_{i=1}^n y^2_{is}\E(T_{is}) \nn&&
         - \frac{1}{n^2}\E\left(\frac{n(n-1)r^2}{n(n-r) + z(r-z)} \sum_{i=1}^n\sum_{j\neq i} y_{is}y_{js}T_{is}T_{js}\right)\nn &=&       
         \frac{(n-1)r}{n^2}\sum_{i=1}^n y^2_{is}(1/r)- \frac{1}{n^2}\sum_{i=1}^n\sum_{j\neq i} y_{is}y_{js} \nn &=&
         \left(\frac{1}{n} -\frac{1}{n^2}\right)\sum_{i=1}^n y^2_{is}- \frac{1}{n^2}\left(\left(\sum_{i=1}^n y_{is}\right)^2 - \sum_{i=1}^n y^2_{is}\right) \nn &=& 
         \frac{1}{n} \sum_{i=1}^n y^2_{is}- \frac{1}{n^2}\left(\sum_{i=1}^n y_{is}\right)^2=
        \sum_{i=1}^n \frac{y^2_{is}}{n}- \left(\sum_{i=1}^n \frac{y_{is}}{n}\right)^2 = \sigma_s^2. \nn
        \label{varhtunbiased}                    
     \end{eqnarray}
     Thus, by~\eqref{jointindicatexp1} and~\eqref{varhtunbiased}, we show unbiasedness of 
     the variance estimator
     $\widehat{ \var}(\hat\mu_{s,\text{HT}})$ under complete randomization:
     \begin{eqnarray}
       \E(\widehat{ \var}(\hat\mu_{s,\text{HT}})) &=&  \E\left(\frac{r-1}{n-1}\hat\sigma_{s,\text{HT}}^2  + \frac{r^2z(r-z)}{n^3(n-r) + n^2z(r-z)}\sum_{i=1}^n\sum_{j\neq i} y_{is}y_{js}T_{is}T_{js}\right)\nn&=& 
       \frac{r-1}{n-1}\E\left(\hat\sigma_{s,\text{HT}}^2\right)
       + \E\left(\frac{r^2z(r-z)}{n^3(n-r) + n^2z(r-z)}\sum_{i=1}^n\sum_{j\neq i} y_{is}y_{js}T_{is}T_{js}\right) \nn &=&
             \frac{r-1}{n-1}\sigma^2_s
       + \E\left(\frac{z(r-z)}{n^3(n-1)}\frac{n(n-1)r^2}{n(n-r) + z(r-z)}\sum_{i=1}^n\sum_{j\neq i} y_{is}y_{js}T_{is}T_{js}\right) \nn &=& 
       \frac{r-1}{n-1}\sigma^2_s
       + \frac{z(r-z)}{n^3(n-1)}\E\left(\frac{n(n-1)r^2}{n(n-r) + z(r-z)}\sum_{i=1}^n\sum_{j\neq i} y_{is}y_{js}T_{is}T_{js}\right) \nn&=&
       \frac{r-1}{n-1}\sigma^2_s +  \frac{z(r-z)}{n^3(n-1)}\sum_{i=1}^n\sum_{j\neq i} y_{is}y_{js} = \var(\hat\mu_{s,\text{HT}}).
     \end{eqnarray}
 
   From~\eqref{jointindicateexp2}, it follows that:
   \begin{eqnarray}
      &&\E\left(\frac{n(n-1)r^2(r-1)}{n^2(r-1) - z(r-z)}\sum_{i=1}^n\sum_{j\neq i} y_{is}y_{jt}T_{is}T_{jt}\right)\nn&=&
      \frac{n(n-1)r^2(r-1)}{n^2(r-1) - z(r-z)}\sum_{i=1}^n\sum_{j\neq i} y_{is}y_{jt}\E(T_{is}T_{jt})\nn&=&
       \frac{n(n-1)r^2(r-1)}{n^2(r-1) - z(r-z)}\sum_{i=1}^n\sum_{j\neq i} y_{is}y_{jt} \frac{n^2(r-1) - z(r-z)}{n(n-1)r^2(r-1)}\nn&=&
       \sum_{i=1}^n\sum_{j\neq i} y_{is}y_{jt}
       \label{midptyisyjt}
   \end{eqnarray}
   Thus, by~\eqref{midptyisyjt}:
   \begin{eqnarray}  
   &&\E\left(\frac{2r^2z(r-z)}{n^4(r-1) - n^2z(r-z)}\sum_{i=1}^n\sum_{j\neq i} y_{is}y_{jt}T_{is}T_{jt}\right) \nn&=&
   \E\left(\frac{2z(r-z)}{n^2}\frac{r^2}{n^2(r-1) - z(r-z)}\sum_{i=1}^n\sum_{j\neq i} y_{is}y_{jt}T_{is}T_{jt}\right) \nn&=&
   \E\left(\frac{2z(r-z)}{n^3(n-1)(r-1)}\frac{n(n-1)r^2(r-1)}{n^2(r-1) - z(r-z)}\sum_{i=1}^n\sum_{j\neq i} y_{is}y_{jt}T_{is}T_{jt}\right) \nn&=&
    \frac{2z(r-z)}{n^3(n-1)(r-1)}  \E\left(\frac{n(n-1)r^2(r-1)}{n^2(r-1) - z(r-z)}\sum_{i=1}^n\sum_{j\neq i} y_{is}y_{jt}T_{is}T_{jt}\right) \nn&=&
     \frac{2z(r-z)}{n^3(n-1)(r-1)}\sum_{i=1}^n\sum_{j\neq i}y_{is}y_{jt}.
   \end{eqnarray}

\end{document}